\newcolumntype{L}[1]{>{\raggedright\let\newline\\\arraybackslash\hspace{0pt}}m{#1}}
\newcolumntype{C}[1]{>{\centering\let\newline\\\arraybackslash\hspace{0pt}}m{#1}}
\newcolumntype{R}[1]{>{\raggedleft\let\newline\\\arraybackslash\hspace{0pt}}m{#1}}
\definecolor{light-gray}{gray}{0.8}
\newtheorem{conj}{Conjecture}
\DeclareSymbolFont{rsfscript}{OMS}{rsfs}{m}{n}
\DeclareSymbolFontAlphabet{\mathrsfs}{rsfscript}
\DeclareMathOperator{\rt}{rt}
\DeclareMathOperator{\diam}{diam}
\DeclareMathOperator{\excl}{\mathrm{excl}}
\DeclareMathOperator{\dupl}{\mathrm{dupl}}
\newcommand{\scc}{strongly connected component}
\newcommand{\scn}{strongly connected}
\begin{document}%

\title{On the Interplay\\ Between Babai and \v{C}ern\'y's Conjectures}
\titlerunning{On Babai and \v{C}ern\'y's conjectures}

\authorrunning{F. Gonze et~al.}
\tocauthor{Fran\c cois~Gonze, Vladimir~V.~Gusev, Bal\'azs~Gerencs\'er, Rapha\"el~M.~Jungers, and Mikhail~V.~Volkov}
\toctitle{On the interplay between Babai and \v{C}ern\'y's conjectures}

\author{Fran\c cois Gonze\inst{1} \and Vladimir V. Gusev\inst{1,2} \and Bal\'azs Gerencs\'er\inst{3} \and Rapha\"el M. Jungers\inst{1} \and Mikhail V. Volkov\inst{2}\thanks{Vladimir Gusev and Mikhail V. Volkov were supported by RFBR grant no.\ 16-01-00795, Russian Ministry of Education and Science project no.\ 1.3253.2017, and the Competitiveness Enhancement Program of Ural Federal University. Bal\'azs Gerencs\'er was supported by PD grant no.\ 121107, National Research, Development and Innovation Office of Hungary. This work was supported by the French Community of Belgium and by the IAP network DYSCO. Rapha\"el Jungers is a Fulbright Fellow and a FNRS Research Associate.}}

\institute{ICTEAM Institute, Universit{\'e} Catholique de Louvain, Louvain-la-Neuve, Belgium\\
 \email{$\{$francois.gonze,vladimir.gusev,raphael.jungers$\}$@uclouvain.be}
\and
Ural Federal University, Ekaterinburg, Russia\\
\email{mikhail.volkov@usu.ru}
\and
Alfr\'ed R\'enyi Institute of Mathematics, Budapest, Hungary\\
\email{gerencser.balazs@renyi.mta.hu}
}

\maketitle
\setcounter{footnote}{0}

\begin{abstract}
    Motivated by the Babai conjecture and the \v{C}ern\'{y} conjecture, we study the reset thresholds of automata with the transition monoid equal to the full monoid of transformations of the state set. For automata with $n$ states in this class, we prove that the reset thresholds are upper-bounded by $2n^2-6n+5$ and can attain the value $\tfrac{n(n-1)}{2}$. In addition, we study diameters of the pair digraphs of permutation automata and construct $n$-state permutation automata with diameter $\tfrac{n^2}{4} + o(n^2)$.
\end{abstract}

\section{\fg{Background and Overview}}
\label{sec:intro}

\makeatletter{\renewcommand*{\@makefnmark}{}
\footnotetext{\jv{A short version of this work has been presented at the conference DLT 2017.}}\makeatother}

\emph{Completely reachable automata}, i.e., deterministic finite automata in which every non-empty subset of the state set occurs as the image of the whole state set under the action \ffg{of a suitable input word, appeared in} \fg{the study of descriptional complexity of formal languages~\cite{Maslennikova12} and in relation to the \v{C}ern\'{y} conjecture~\cite{Don16}. In~\cite{BondarVolkov16}} an emphasis has been made on automata in this class with minimal transition monoid size. In the present paper we focus on automata being in a sense the extreme opposites of those studied in~\cite{BondarVolkov16}, namely, on automata of maximal transition monoid size. In other words, we consider automata \emph{with full transition monoid}, i.e., transition monoid equal to the full monoid of transformations of the state set; clearly, automata with this property are completely reachable. There are several reasons justifying special attention to automata with full transition monoid. First, as observed in~\cite{BondarVolkov16}, the membership problem for this class of automata is decidable in polynomial time (of the size of the input automaton) while the complexity of membership in the class of all completely reachable automata still remains unknown. Second, this class contains automata that correspond to Brzozowski's most complex regular languages~\cite{Brzozowski13} and to other regular languages \ffg{that} \fg{play a distinguished role} in descriptive complexity analysis. Finally, and most importantly from our viewpoint, automata with full transition monoid are synchronizing and their synchronization issues constitute a sort of meeting point for two famous open problems---the \emph{Babai conjecture} and the \emph{\v{C}ern\'{y} conjecture}. \ffg{Next, we recall these conjectures and outline the contribution of the present paper in view of these problems.}

\paragraph{\textbf{1.1. The Babai \fg{Conjecture.}}}
Let $A$ be a set of generators of a finite group $G$. The \emph{Cayley graph} $\Gamma(G,A)$ consists of $G$ as the set of vertices and the edges $\{g, ga\}$ for all $g \in G$, $a \in A$. The \emph{diameter} of $\Gamma(G,A)$ is the maximum among the lengths of shortest paths between any two vertices. In group theory terms, the diameter of $\Gamma(G,A)$ is the smallest $\ell$ such that every $g \in G$ can be represented as $g=a_1^{\varepsilon_1} a_2^{\varepsilon_2} \cdots a_{\ell}^{\varepsilon_\ell}$, where $\varepsilon_i \in \{1,-1\}$ and $a_i \in A$ for all $i =1,\dots, \ell$. The \emph{diameter} $\diam(G)$ of $G$ is the maximal diameter of $\Gamma(G,A)$ among all generating sets $A$ of $G$. The notion of group diameter is related to the growth rate in groups, expander graphs, random walks on groups and their mixing times, see, e.g.,~\cite{Saloff-Coste2004, MR3348442}. Recently, the following conjecture received significant attention:

\begin{conj}[\mdseries Babai~\cite{BaSe92}]
The diameter of each non-abelian finite simple group $G$ does not exceed $(\log |G|)^{O(1)}$, where the implied constant is absolute.
\end{conj}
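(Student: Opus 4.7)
The plan is to attack the conjecture via the Classification of Finite Simple Groups, reducing to the two infinite families that carry all the difficulty: the alternating groups $A_n$ and the simple groups of Lie type (the sporadic groups are a finite list and trivially satisfy any asymptotic bound). For each family I would try to establish a polylogarithmic diameter bound uniformly in the generating set $A$, and then combine them. Before diving in, I would fix a convenient reformulation: bounding $\diam(\Gamma(G,A))$ amounts to showing that the iterated product $(A \cup A^{-1} \cup \{1\})^\ell$ fills $G$ for some $\ell = (\log |G|)^{O(1)}$, so the problem becomes a question about how fast subsets of $G$ grow under multiplication.

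For the groups of Lie type, my plan is to push the \emph{product theorem} paradigm of Helfgott. The key lemma one wants is: if $G$ is a finite simple group of Lie type and $A \subseteq G$ is a generating set, then either $A \cdot A \cdot A = G$ or $|A \cdot A \cdot A| \ge |A|^{1+\delta}$ for an absolute $\delta > 0$. Iterating such a tripling inequality $O(\log |G|/\delta)$ times gives a polylogarithmic diameter bound. The proof of the product theorem itself proceeds by the Larsen--Pink style ``escape from subvarieties'' argument, coupled with a pivoting technique that uses a regular semisimple element obtained from $A$ to force growth. The main obstacle here is the rank: the constants $\delta$ in the known product theorems (Pyber--Szab\'o, Breuillard--Green--Tao) depend on the Lie rank, and making them uniform as the rank grows would require a genuinely new idea, probably a quantitative bound on the number of ``thin'' subvarieties of a Chevalley group that is polynomial in the rank.

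For the alternating groups $A_n$, the strategy is different because $|G| \sim n!$ so $\log |G| \sim n \log n$, and one must build short words realising an arbitrary permutation from $A$. I would try to first produce a single $3$-cycle as a short word in $A$ (using a Schreier-Sims style sifting argument on a carefully chosen point stabiliser chain), then use conjugates of this $3$-cycle to reach all of $A_n$. The heuristic is that a random element of $A$ has large support, so commutators of conjugates quickly produce elements with small support; the trick is making this quantitative. This is the strategy behind the Babai--Seress $\exp((n \log n)^{1/2+o(1)})$ bound and the Helfgott--Seress quasipolynomial improvement.

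The main obstacle, and the reason this is a famous open conjecture rather than a theorem, is precisely the alternating case: bridging the gap from quasipolynomial $\exp((\log n)^{O(1)})$ to polynomial $(\log n)^{O(1)}$ in $\log |G|$ requires a fundamentally sharper handle on how conjugation classes interact in $S_n$ than is currently available. A natural route would be to import the product-theorem philosophy to $A_n$ directly, but growth in $A_n$ is \emph{not} uniformly polynomial under tripling (the Alon--Roichman type obstructions from product actions get in the way), so one has to replace raw tripling by a more refined invariant, perhaps entropy-based or measuring support sizes, and show that this invariant grows polynomially per multiplication step. I would expect any successful proof to combine such a substitute for the product theorem with the structural information on permutation groups coming from the O'Nan--Scott theorem, in the spirit of the reductions already used for primitive groups by Babai--Pyber.
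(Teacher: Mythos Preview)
The statement you were asked to prove is Babai's \emph{conjecture}; the paper does not prove it and offers no proof to compare against. It is quoted purely as background, and the paper itself remarks that for the symmetric group only a quasipolynomial bound $\exp\bigl(O((\log n)^4\log\log n)\bigr)$ is known, via Helfgott--Seress and the Classification of Finite Simple Groups. So there is nothing on the paper's side to match your proposal against.

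As for the proposal itself: what you have written is an accurate and well-informed survey of the current landscape (CFSG reduction, product theorems of Helfgott/Pyber--Szab\'o/Breuillard--Green--Tao for bounded-rank Lie type, Babai--Seress and Helfgott--Seress for alternating groups), but it is not a proof, and you say as much yourself. The two places where you write ``would require a genuinely new idea'' and ``requires a fundamentally sharper handle \dots than is currently available'' are exactly the gaps that make the conjecture open. In particular, a product theorem with $\delta$ independent of the Lie rank is not known, and for $A_n$ no argument currently closes the quasipolynomial-to-polynomial gap; your proposed ``entropy-based or support-size'' substitute for tripling is a plausible heuristic but not an argument. So the proposal should be read as a research outline, not as a proof attempt that can be assessed for correctness.
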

Note that for the case of the symmetric group $S_n$, this conjecture readily implies $\diam (S_n)\le n^{O(1)}$. (The group $S_n$ is not simple but for $n\ge5$ it contains a non-abelian simple subgroup of index 2.)

The Babai conjecture was proved for various classes of groups, but despite intensive research effort it remains open, see~\cite{HeSe14} for an overview. In the case of $S_n$, a recent breakthrough gives only a quasipolynomial upper bound, namely, $\exp(O((\log n)^4 \log \log n)$, and it relies on the Classification of Finite Simple Groups~\cite{HeSe14}. It is even more astonishing if we compare it to the best known lower bound in this case: for the classical set of generators consisting of the transposition $(1,2)$ and the full cycle $(1,2,\dots, n)$, every permutation in $S_n$ can be expressed as a product of at most $\sim \tfrac{3 n^2}{4}$ (asymptotically) generators~\cite{Zubov}.

\paragraph{\textbf{1.2. The \v{C}ern\'{y} \fg{Conjecture.}}}
\setcounter{footnote}{0}
Recall that a deterministic finite state automaton (DFA) is a triple\footnote{As initial and final states play no role in our considerations, we omit them.} $\langle Q, \Sigma, \delta \rangle$, where $Q$ is a finite set of states, $\Sigma$ is a finite set of input symbols called the \emph{alphabet}, and $\delta$ is a function $\delta\colon Q \times \Sigma \rightarrow Q$ called the \emph{transition function}. A \emph{word} is a sequence of letters from the alphabet. The \emph{length} of a word is the number of its letters. We can look at $\delta(q,a)$ as the result of the \emph{action} of the letter $a\in\Sigma$ at the state $q \in Q$. We extend this action to the action of words over $\Sigma$ on $Q$ denoting, for any word $w$ and any state $q\in Q$, the state resulting in successive applications of the letters of $w$ from left to right by $q{\cdot}w$. For a subset $P\subseteq Q$, we write $P{\cdot}w$ for the set $\{p{\cdot}w\mid p\in P\}$.

A DFA $\mathrsfs{A}=\langle Q, \Sigma, \delta \rangle$ is called \emph{synchronizing} if there exist a word $w$ and a state $f$ such that $Q{\cdot}w=\{f\}$. Any such word is called a \emph{synchronizing} or \emph{reset} word. The minimum length of reset words for $\mathrsfs{A}$ is called the \emph{reset threshold} of $\mathrsfs{A}$ and is denoted by $\rt(\mathrsfs{A})$. Synchronizing automata appear in various branches of mathematics and are related to synchronizing codes~\cite{berstel2009codes}, part orienting problems~\cite{Natarajan:1986,Natarajan:1989}, substitution systems~\cite{Frettloh&Sing:2007}, primitive sets of matrices~\cite{DBLP:journals/corr/GerencserGJ16}, synchronizing groups~\cite{ArCaSt15}, convex optimization~\cite{gonze2016synchronizing}, and consensus theory~\cite{PYChev}.

\begin{conj}[\mdseries \v{C}ern\'{y}~\cite{Cerny1964,CernyPirickaRosenauerova1971}]
The reset threshold of an $n$-state synchronizing automaton is at most $(n-1)^2$.
\end{conj}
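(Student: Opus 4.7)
The plan is to follow the standard subset-compression paradigm underlying essentially every known upper bound on reset thresholds. For $\mathrsfs{A} = \langle Q, \Sigma, \delta \rangle$ with $|Q| = n$, define
\begin{equation*}
f(k) := \max_{P \subseteq Q,\ |P|=k}\ \min\{|w| : |P \cdot w| < k\},
\end{equation*}
the worst-case length of a word strictly compressing a $k$-element subset. Iterating the compression from $Q$ down to a singleton gives $\rt(\mathrsfs{A}) \le \sum_{k=2}^{n} f(k)$, so the \v{C}ern\'y bound would follow from the uniform estimate $f(k) \le n-1$ for every $k \in \{2,\dots,n\}$.

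To bound $f(k)$, I would study the $k$-subset power automaton and search for short paths to the collapsing configurations---those $k$-subsets from which some single letter identifies two states. The classical approach, due to Frankl and Pin, encodes subsets as indicator vectors in $\mathbb{R}^n$, considers the invariant subspaces spanned by indicators of large subsets, and deduces the existence of a compressing word of length at most $\binom{n-k+2}{2}$ via a dimension count. Summing over $k$ yields the well-known cubic bound $(n^3 - n)/6$ on $\rt(\mathrsfs{A})$. To push toward the conjectured quadratic bound, I would try to exploit additional combinatorial structure: properties of the transition monoid (as in the authors' full-monoid setting), primitivity of the underlying digraph, or the local shape of the letter actions near the synchronizing state.

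The main obstacle is that the per-level Frankl--Pin estimate is essentially tight in isolation: there exist automata for which $f(k)$ alone is of order $n^2$ for some $k$, so no uniform bound $f(k) \le n-1$ holds across all automata and all levels. A successful proof must therefore be \emph{global}, arguing that large values of $f(k)$ at some levels are incompatible with large values at other levels; equivalently, it must supply an amortized or potential-function argument that replaces the term-by-term summation. A plausible line of attack is to assume a minimal counterexample and use the structure of its transition monoid to force it to resemble \v{C}ern\'y's own extremal family, but no such structural classification has emerged in more than fifty years of sustained effort, which is precisely why the conjecture remains open despite recent polynomial refinements.
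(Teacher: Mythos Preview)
There is nothing to compare here: the statement is the \v{C}ern\'y \emph{conjecture}, and the paper does not prove it. It is cited as an open problem (Conjecture~2), with the remark that the best general bound is cubic and that the conjecture has resisted attack for half a century. Your proposal is not a proof either, and you say as much in your final paragraph: you sketch the Frankl--Pin compression argument, which yields $\tfrac{n^3-n}{6}$, and then correctly identify why pushing it to $(n-1)^2$ fails---the per-level bound $f(k)\le n-1$ is false in general, and no global amortization is known. That diagnosis is accurate, but it is a survey of obstructions, not a proof strategy with a missing lemma that could plausibly be filled in.

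If the intent was to supply a proof, the gap is fundamental: the approach you outline is exactly the one that has been tried and is known to stall at the cubic bound. If the intent was to explain why no proof is given, then your write-up is appropriate in content but mislabeled as a ``proof proposal.''
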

If the conjecture holds true, then the value $(n-1)^2$ is optimal, since for every $n$ there exists an $n$-state automaton $\mathrsfs{C}_n$ with the reset threshold equal to $(n-1)^2$~\cite{Cerny1964}.

The \v{C}ern\'{y} conjecture has gained a lot of attention in automata theory. It has been shown to hold true in various special classes\cite{Dubuc98,Kari03,Ry97,STEINBERG20115487,GrechK13a,Steinberg:2011}, but in the general case, it remains open for already half a century. For more than 30 years, the best upper bound was $\tfrac{n^3 - n}{6}$, obtained in~\cite{Pin83a, Frankl82} and independently in~\cite{KlyachkoRystsovSpivak87}. Recently, a small improvement on this bound has been reported in~\cite{Szykula2017}: the new bound is still cubic in $n$ but improves the coefficient $\frac16$ at $n^3$ by $\frac4{46875}$. A survey on synchronizing automata and the \v{C}ern\'{y} conjecture can be found in~\cite{volkov_survey}.

In order to make the relationship between the \v{C}ern\'{y} and the Babai conjectures more visible, we borrow from~\cite{AV2004} the idea of restating the former in terms similar to those used in the formulation of the latter. Let $T_n$ be the full transformation monoid of an $n$-element set $Q$. A transformation $t \in T_n$ is a \emph{constant} if there exists $f \in Q$ such that for all $q \in Q$ we have $t(q)=f$. We can state the \v{C}ern\'{y} conjecture as follows: for every set of transformations $A \subseteq T_n$, if the submonoid generated by $A$ contains a constant, then there exists a constant $g$ such that $g=a_1 a_2\cdots a_\ell$, where $\ell \leq (n-1)^2$ and $a_i \in A$ for all $i=1,\ldots, \ell$. It is easy to see that this formulation is equivalent to the original one by treating the letters of an automaton as the transformations of its state set since reset words precisely correspond to constant transformations.

\paragraph{\textbf{1.3. Our \fg{Contributions.}}}
The first part of our paper is devoted to the following \emph{hybrid Babai--\v{C}ern\'{y} problem}\footnote{\ffg{During the preparation of this paper we discovered
that the same question was also posed in~\cite[Conjecture 3]{SALOMAA2003},
though its connection with Babai's problem was not
registered there.}}: given a set of generators $A$ of the full transformation monoid $T_n$, what is the length $\ell(A)$ of the shortest product $a_1a_2 \cdots a_\ell$ with $a_i\in A$ which is equal to a constant? Namely, we are interested in the bounds on $\ell(A)$ that depend only on $n$. The hybrid Babai--\v{C}ern\'{y} problem is a special case of the \v{C}ern\'{y} problem. Indeed, it is a restriction to the class of DFAs with the \emph{transition monoid}, i.e., the transformation monoid generated by the actions of letters, equal to $T_n$. Of course, the general cubic upper bound is valid, but not the lower bound, since the \v{C}ern\'{y} automata $\mathrsfs{C}_n$  do not belong to this class (even though they are completely reachable, see~\cite{BondarVolkov16}). In Section~\ref{sec:automata_transitionTn} we establish that the growth rate of $\ell(n)$ is $\Theta(n^2)$, more precisely, we show that $\frac{n(n-1)}{2}\le\ell(n)\le 2n^2 -6n +5$. We also present the exact values of $\ell(n)$ for small values of $n$ resulting from our computational experiments. Our contribution can be also seen as a progress towards resolution of Conjecture~3 from~\cite{SALOMAA2003}.

The second part of our paper is devoted to a ``local'' version of the Babai problem where we restrict our attention to the action on the set of (unordered) pairs. Let $A$ be a set of permutations from $S_n$. The \emph{pair digraph} $P(A)$ consists of pairs $\{i,j\}$ as the set of vertices and the edges $(\{i, j\}, \{ia, ja\})$ for all $i,j$ and $a \in A$. The \emph{diameter} of $P(A)$, denoted $\diam P(A)$, is the maximum among the lengths of shortest (directed) paths between any two vertices. We study the behavior of $\diam P(A)$ in terms of $n$. The problem comes from analysis of certain aspects of Markov chains and group theory~\cite{Frid98}, but our interest in it is mainly motivated by its importance for the theory of synchronizing automata. Indeed, every synchronizing automaton $\mathrsfs{A}$ must have a letter $a$, say, whose action merges a pair of states. Thus, one can construct a reset word for $\mathrsfs{A}$ by successively moving pairs of states to a pair merged by $a$. If $\mathrsfs{A}$ possesses sufficiently many letters acting as permutations (as automata with the full transition monoid do), one can move pairs by these permutations, and hence, upper bounds on the diameter of the corresponding pair digraph induce upper bounds on $\rt(\mathrsfs{A})$.

Clearly, $\diam P(A) \leq \tfrac{n(n-1)}2$ for all $A \subseteq S_n$. In Section~\ref{sec:2trans} we establish the lower bound $\tfrac{n^2}{4} + o(n^2)$ on $\diam P(A)$ by presenting a series of examples with only two generators for every odd $n$.


\paragraph{\textbf{1.4. Related \fg{Work.}}}
The diameters of groups and semigroups constitute a relatively well studied topic. A general discussion on diameters and growth rates of groups can be found in~\cite{MR3348442}. Various results about the diameter of $T_n$ and its submonoids are described in~\cite{SALOMAA2003, Panteleev2015}. The length of the shortest representation of a constant (including the case of partially defined transformations) is typically studied in the framework of synchronizing automata, see~\cite{volkov_survey,DBLP:journals/corr/Vorel14,ananichev2013slowly}.

\section{Automata \fg{with Full Transition Monoid}}
\label{sec:automata_transitionTn}

\paragraph{\textbf{2.1. Na{\"i}ve \fg{Construction.}}}
Recall that, on the one hand, the \v{C}ern\'y automata $\mathrsfs{C}_n$ from~\cite{Cerny1964} have two letters of which one acts as a cyclic permutation and the other fixes all states, except one, which is mapped to the next element in the cyclic order defined by the cyclic permutation. On the other hand, the extremal case of the Babai conjecture for $S_n$ is composed of a cyclic permutation and the transformation which fixes all elements except two, which are neighbors in the cyclic order defined by the cyclic permutation. Therefore, one could wonder if a combination of these transformations could result in a DFA with both large reset threshold and full transition monoid.

\fg{
\begin{figure}[ht]
\begin{center}
\scalebox{0.8}{
\begin{tikzpicture}[->,>=stealth',shorten >=1pt,auto,node distance=2cm,semithick]
  \tikzstyle{every state}=[fill=light-gray,text=black, scale=1]

  \node[state] (A)                    {$q_n$};
  \node[state]         (B) [right of=A] {$q_1$};
  \node[state]         (C) [right of=B] {$q_2$};
  \node[state]         (D) [right of=C] {$q_3$};

  \node         (E) [below right  of=D] {$\cdots$};

  \node[state]         (F) [below left of=E] {$q_{k-1}$};
  \node[state]         (G) [left of=F] {$q_{k}$};
  \node[state]         (H) [left of=G] {$q_{k+1}$};
  \node[state]         (I) [left of=H] {$q_{k+2}$};

  \node         (J) [above left  of=I] {$\cdots$};

  \path (A) edge [loop below]  	node {$b,c$} (A)
  			edge             	node {$a$} (B)
        (B) edge              	node {$a,b$} (C)
            edge [loop below]  	node {$c$} (B)
        (C) edge              	node {$a$} (D)
        	edge [loop below] 	node {$b,c$} (D)
        (D) edge  			  	node {$a$} (E)
        	edge [loop below]  	node {$b,c$} (A)
		(E) edge              	node {$a$} (F)
        (F) edge              	node {$a$} (G)
        	edge [loop above] 	node {$b,c$} (D)
        (G) edge [bend left=10] node {$a,c$} (H)
        	edge [loop above] 	node {$b$} (D)
        (H) edge              	node {$a$} (I)
        	edge [bend left=10]	node {$c$} (G)
        	edge [loop above] 	node {$b$} (D)
        (I) edge              	node {$a$} (J)
        	edge [loop above] 	node {$b,c$} (D)
        (J) edge              	node {$a$} (A);
\end{tikzpicture}}
\end{center}
\caption{The automaton $\mathrsfs{C\!B}_{n,k}$}\label{MixedCernyBabai}
\end{figure}
}

The construction is defined as follows. There are $n$ states $q_1,\dots,q_n$ and three letters $a$, $b$, and $c$. The letter $a$ acts as a cyclic permutation on the states, following their indices. The letter $b$ fixes all states, except $q_1$, which is mapped to $q_2$ by $b$. The letter $c$ fixes all states, except $q_k$ and $q_{k+1}$, for some $k$, which are swapped by $c$. The resulting automaton $\mathrsfs{C\!B}_{n,k}$ is shown in Fig.~\ref{MixedCernyBabai}. We notice that if we remove the letter $c$, we obtain the automaton $\mathrsfs{C}_n$ from the \v{C}ern\'y family providing the largest currently existing lower bound in the \v{C}ern\'y problem, and if we remove the letter $b$, we obtain a generating set of the group $S_n$ providing the largest currently existing lower bound in the Babai problem for $S_n$. Also observe that in the case where $k=2$, our automaton is nothing but Brzozowski's ``Universal Witness''~\cite{Brzozowski13} recognizing the most complex regular language, i.e., the language witnessing at once practically all tight lower bounds found for the state complexity of various operations with regular languages, see~\cite[Theorem~6]{Brzozowski13}.
The next result shows that, however, the reset threshold of the automaton $\mathrsfs{C\!B}_{n,k}$ is upper-bounded by $O(n\log n)$, while, as we show later, among automata with full transition monoid there exist ones whose reset threshold is a quadratic function of their state number.

\begin{theorem}
\label{thm:naive}
The automaton $\mathrsfs{C\!B}_{n,k}$ has a reset word of length at most $4n\lceil\log_2n\rceil$.
\end{theorem}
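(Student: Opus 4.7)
The plan is an iterative halving argument. I would prove that for every image $P\subseteq Q$ with $|P|\ge 2$ there is a word $w_P$ of length at most $4n$ with $|P\cdot w_P|\le\lceil|P|/2\rceil$, and then concatenate $\lceil\log_2 n\rceil$ such halving blocks starting from $P=Q$; the total length is then at most $4n\lceil\log_2 n\rceil$.

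The opening phase $P=Q$ admits a clean explicit construction: the word $(ba^2)^{\lceil n/2\rceil}$, of length $3\lceil n/2\rceil\le 4n$, reduces $Q$ to roughly half its size. A direct induction on the number of factors shows that both $q_1$ and $q_2$ remain in the image throughout the process, so each factor $b$ merges them while the subsequent $a^2$ advances a ``hole'' around the cycle; after $\lceil n/2\rceil$ factors the image has size $\lceil n/2\rceil$.

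For a general phase I would build $w_P$ as a concatenation of $\lceil|P|/2\rceil$ merges. Each merge rotates a chosen pair of elements of $P$ to the positions $(q_1,q_2)$ using a short segment of $a$'s (preceded by one use of $c$ when the pair is two apart on the cycle rather than adjacent) and finishes with a single $b$. Because the cyclic gaps of $P$ sum to $n$, a careful choice of the pairing and of the order of the merges keeps the total length of the rotation segments linear in $n$, so that together with at most $\lceil|P|/2\rceil$ uses of $b$ and of $c$ the whole word fits in the $4n$ budget.

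The main obstacle is the sparse regime, in which $|P|$ has already been halved several times and the set has very few native adjacencies. A naive per-pair strategy spends $\Theta(n)$ rotation letters on each merge, threatening a $\Theta(n|P|)$ phase cost and hence a quadratic overall bound. The resolution is to amortise the rotations across the successive merges of a single phase: after $c$ creates an adjacency and $b$ performs a merge, the residual image is arranged so that the next useful pair is only a short rotation away from $(q_1,q_2)$, allowing the next merge to reuse most of the previous rotation work. Formalising this amortisation so that the cumulative rotation cost in a single phase stays within $4n$ is the technical heart of the proof; once it is in place, the $\lceil\log_2 n\rceil$-fold geometric iteration delivers the announced bound.
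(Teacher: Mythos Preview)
Your high-level framework matches the paper's: both use an iterative halving argument in which each ``phase'' costs $O(n)$ letters and there are $\lceil\log_2 n\rceil$ phases. Your opening phase is essentially the paper's first merging round (the paper writes it as $b(a^2b)^{\lfloor(n-1)/2\rfloor}$). The disagreement is entirely in the part you yourself flag as ``the technical heart of the proof'' and then do not carry out.

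The concrete mechanism you propose for the sparse regime does not work. You write that each merge is ``preceded by one use of $c$ when the pair is two apart on the cycle rather than adjacent''. But after the first halving the surviving states sit at distance~$2$; after the second halving they sit at distance roughly~$4$; after $t$ halvings the gaps are of order $2^t$. A single application of $c$ shifts one token by exactly one position relative to the others, so closing a gap of width $d$ requires $d-1$ uses of $c$, not one. Your hoped-for amortisation is the right instinct, but you have not supplied a scheme that realises it, and the one specific device you mention is too weak.

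The paper resolves this by separating each halving into two sub-rounds rather than interleaving pairing and merging. In a \emph{pairing round} one applies only $a$'s and $c$'s according to the rule ``apply $c$ whenever $q_{k+1}$ is isolated in the current image, otherwise apply $a$''. The effect is that the first token to reach $q_{k+1}$ shuttles back and forth via $(ca)^{d}$ while the next token advances toward it; once they form an adjacent couple the shuttle releases them and the next token takes over. Because the total forward motion is at most $n$ and each $c$ is paired with an $a$, the whole pairing round costs fewer than $2n$ letters and converts a set of $m$ isolated states into $\lfloor m/2\rfloor$ adjacent couples (plus possibly one leftover). A subsequent \emph{merging round} using only $a$'s and $b$'s then collapses each couple at $q_1q_2$, again in fewer than $2n$ letters. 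Two sub-rounds per halving give the factor $4n$. This two-round structure \emph{is} the amortisation you were looking for; without it (or an equivalent device) your argument is incomplete.
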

\jv{
\begin{proof}
Recall that we aim to show that, for each $k$, the automaton $\mathrsfs{C\!B}_{n,k}$ has a reset word of length at most $4n\lceil\log_2n\rceil$. It is easy to see that the word $b(cab)^{n-2}$ of length $3n-5<4n\lceil\log_2n\rceil$ resets the automaton $\mathrsfs{C\!B}_{n,1}$, so that we assume that $k>1$ in the rest of the proof.

We construct a word $w$ letter-by-letter in several rounds, starting with the empty word. The main parameter in our construction is the current image of the state set of $\mathrsfs{C\!B}_{n,k}$ under the action of the word constructed so far; let $S$ stand for this image. (Thus, we have $S=\{q_1,\dots,q_n\}$ at the beginning, and $S$ becomes a singleton at the end of the process.) It is quite helpful to visualize $S$ as the set whose states bear certain tokens. If one colors states covered by tokens light-gray, then Fig.~\ref{MixedCernyBabai} represents the initial position while Fig.~\ref{fig:claim2} shows some intermediate situation.
\begin{figure}[h]
\begin{center}
\scalebox{0.9}{
\begin{tikzpicture}[->,>=stealth',shorten >=1pt,auto,node distance=2cm,semithick]
  \tikzstyle{every state}=[text=black, scale=1]

  \node[state,fill=light-gray] (A)                    {$q_n$};
  \node[state]         (B) [right of=A] {$q_1$};
  \node[state,fill=light-gray]         (C) [right of=B] {$q_2$};
  \node[state,fill=light-gray]         (D) [right of=C] {$q_3$};

  \node         (E) [below right  of=D] {$\cdots$};

  \node[state,fill=light-gray]         (F) [below left of=E] {$q_{k-1}$};
  \node[state]         (G) [left of=F] {$q_{k}$};
  \node[state,fill=light-gray]         (H) [left of=G] {$q_{k+1}$};
  \node[state,fill=light-gray]         (I) [left of=H] {$q_{k+2}$};

  \node         (J) [above left  of=I] {$\cdots$};

  \path (A) edge [loop below]  	node {$b,c$} (A)
  			edge             	node {$a$} (B)
        (B) edge              	node {$a,b$} (C)
            edge [loop below]  	node {$c$} (B)
        (C) edge              	node {$a$} (D)
        	edge [loop below] 	node {$b,c$} (D)
        (D) edge  			  	node {$a$} (E)
        	edge [loop below]  	node {$b,c$} (A)
		(E) edge              	node {$a$} (F)
        (F) edge              	node {$a$} (G)
        	edge [loop above] 	node {$b,c$} (D)
        (G) edge [bend left=10] node {$a,c$} (H)
        	edge [loop above] 	node {$b$} (D)
        (H) edge              	node {$a$} (I)
        	edge [bend left=10]	node {$c$} (G)
        	edge [loop above] 	node {$b$} (D)
        (I) edge              	node {$a$} (J)
        	edge [loop above] 	node {$b,c$} (D)
        (J) edge              	node {$a$} (A);
\end{tikzpicture}}
\end{center}
\caption{Tokens mark a subset in $\mathrsfs{C\!B}_{n,k}$}\label{fig:claim2}
\end{figure}
When a letter $x\in\{a,b,c\}$ is applied to $S$, the token that covers $q_i$, say, moves to the state $q_i{\cdot}x$; in more visual terms, the token ``slides'' along the arrow representing the transition $q_i\to q_i{\cdot}x$. If two tokens arrive at the same state, which happens whenever both $q_1$ and $q_2$ bear tokens and the letter $b$ is applied, we remove one of the tokens.

In the course of our construction, rounds of two sorts alternate: \emph{merging}, in which only $a$'s and $b$'s are applied to $S$, and \emph{pairing}, in which only $a$'s and $c$'s are applied. We call a state from $S$ \emph{isolated} if both its neighbor states (with respect to the cyclic order defined by $a$) are not in $S$. A merging round starts whenever $|S|>1$ and $S$ has at most one isolated state, and it lasts while $S$ contains non-isolated states; a pairing round starts whenever $|S|>1$ and all states in $S$ are isolated, and it lasts while $S$ contains more than one isolated state. Every round consists of a number of steps, in each of which we choose a letter, append the chosen letter to the word $w$ and update the set $S$ by applying the letter to it. The choice is done according to one of the two following rules (M) and (P) used during merging and pairing rounds, respectively:
\begin{itemize}
\item[(M)] $b$ is chosen whenever $q_1,q_2\in S$; otherwise $a$ is chosen;
\item[(P)] $c$ is chosen whenever $q_{k+1}\in S$, but $q_k,q_{k+2}\notin S$ (so that $q_{k+1}$ is isolated); otherwise $a$ is chosen.
\end{itemize}

Clearly, at the beginning no state is isolated, and hence, the first round of our construction must be merging. It amounts to an immediate calculation to see that by the end of the first round, we have $w=b(a^2b)^{\lfloor\frac{n-1}2\rfloor}$ and $S=\{q_2,q_4,\dots q_{2\lfloor\frac{n}2\rfloor}\}$.

Now we are going to verify two claims.

\emph{Claim 1.} If $|S|=m$ before any of the next merging rounds, then $|S|=\lceil\frac{m}2\rceil$ at the end of the round.

We say that two neighbor states $q_\ell,q_\ell{\cdot}a\in S$ form an \emph{isolated couple} if each of these states has exactly one neighbor in $S$.

\emph{Claim 2.} If $|S|=m$ before a pairing round, then at the end of the round $S$ is partitioned in either $\frac{m}2$ isolated couples (if $m$ is even) or $\frac{m-1}2$ isolated couples and one isolated state (if $m$ is odd).

First we prove Claim 2. The \emph{distance from $q_i$ to $q_j$} is $\min\{d\in\mathbb{N}\mid q_i{\cdot}a^d=q_j\}$. We order tokens that cover the states in $S$ according to the distance from their states to the state $q_{k+1}$: the $i$-th token is the one that covers the state with the distance $d_i$ to $q_{k+1}$, where $0\le d_1<d_2<\dots<d_m<n$. Now consider the evolution of the set $S$ under the choice of letters according to the rule (P). Clearly, the first $d_1$ choices are all $a$'s. After that the first token reaches $q_{k+1}$. Since the action of $a$ translates the set $S$, without affecting distances between its states, all states in $S$ remain isolated at this point. In particular, $q_{k+1}$ is isolated, and hence, (P) forces the letter $c$ to be applied. This moves the first token ``backwards'' to the state $q_k$ while all other tokens keep their positions. The next letter to be applied is $a$, and its application moves all tokens one step ``forwards'' so that the token from $q_k$ returns to $q_{k+1}$. Clearly, the distance from the state that holds the second token to $q_{k+1}$ becomes $d_2-d_1-1$ after these two moves. If the state $q_{k+1}$ remains isolated, another application of $c$ is invoked, followed by another application of $a$, and this results in a further decrement of the distance from the state that holds the second token to $q_{k+1}$. Eventually, after the suffix $a^{d_1}(ca)^{d_2-d_1}$ is appended to $w$, the second token reaches the state $q_k$. At this moment, the third token (if it exists) covers a state with distance $d_3-d_2>1$ to $q_k$ whence $q_{k+1},q_k$ form an isolated couple in $S$. The two tokens covering these states will then remain adjacent till the end of the round.

If $m=2$ or $m=3$, we are done. If $m>3$, we proceed in the same way. Namely, the next $d_3-d_2$ choices are all $a$'s. After that the third token reaches $q_{k+1}$. Except the first two, all other tokens remain isolated. Now (P) forces $c$ and $a$ to be alternatively chosen $d_4-d_3$ times each. This makes the third token shuffle between $q_{k+1}$ and $q_k$, while the fourth and the next tokens move $d_4-d_3$ steps ``forwards''. After that $q_{k+1},q_k$ form yet another isolated couple in $S$, etc.

We have shown that at the end of the round, the set $S$ indeed consists of either $\frac{m}2$ isolated couples (if $m$ is even) or $\frac{m-1}2$ isolated couples and one isolated state (if $m$ is odd). Moreover, the suffix appended to $w$ during the round is of the form
\begin{equation}
\label{eq:suffix2}
a^{d_1}(ca)^{d_2-d_1}a^{d_3-d_2}(ca)^{d_4-d_3}a^{d_5-d_4}(ca)^{d_6-d_5}\cdots.
\end{equation}
The letter $a$ occurs in this suffix $d_m$ times if $m$ is even and $d_{m-1}$ times if $m$ is odd, and the number of occurrences of $c$ is less than that of $a$. Since $d_{m-1}<d_m<n$, we conclude that the length of the suffix \eqref{eq:suffix2} is less than $2n$.

\smallskip

Now it is easy to prove Claim~1. In view of Claim~2, at the beginning of the round, the set $S$ consists of either $\frac{m}2$ isolated couples (if $m$ is even) or $\frac{m-1}2$ isolated couples and one isolated state (if $m$ is odd). If $\{q_\ell,q_\ell{\cdot}a\}$ is an isolated couple, we say that $q_\ell$ is its \emph{left state}. Now we order isolated couples in $S$ according to the distance from their left states to the state $q_1$: the $i$-th couple is the one with the distance $d_i$ from its left state to $q_1$, where $0\le d_1<d_2<\dots<d_{\lceil\frac{m}2\rceil}<n$. Consider the evolution of the set $S$ under the choice of letters according to the rule (M). The first $d_1$ choices are all $a$'s. After that the tokens that initially covered the states of the first isolated couple arrive at the states $q_1$ and $q_2$, and hence, (M) forces the letter $b$ to be applied. This application removes the token from $q_1$ and does not change anything else. The state $q_2$ then becomes isolated. The next $d_2-d_1$ choices are again all $a$'s, and the successive applications of these $a$'s bring tokens that initially covered the states of the second isolated couple to the states $q_1$ and $q_2$. Then, again, $b$ is chosen, removing the token from $q_1$ and creating yet another isolated state in $S$, etc. At the end of the round, exactly one token from each isolated couple is removed and all remaining states are isolated. The number of these states is $\frac{m}2$ if $m$ is even or $\frac{m+1}2$ if $m$ is odd; in short, $\lceil\frac{m}2\rceil$, as claimed.

Moreover, the suffix appended to $w$ during the round is of the form
\begin{equation}
\label{eq:suffix1}
a^{d_1}ba^{d_2-d_1}b\cdots a^{d_{\lceil\frac{m}2\rceil}-d_{\lceil\frac{m}2\rceil-1}}b.
\end{equation}
The letter $a$ occurs in this suffix $d_{\lceil\frac{m}2\rceil}<n$ times and the letter $b$ occurs $\lceil\frac{m}2\rceil<n$ times, whence the length of the suffix \eqref{eq:suffix1} is less than $2n$.

\smallskip

Claim~1, together with the observation we made about the first merging round, readily implies that the number of merging rounds is at most $\lceil\log_2n\rceil$. Since merging and pairing rounds alternate, the total number of rounds is upper-bounded by $2\lceil\log_2n\rceil$. As observed after the proofs of Claims~1 and~2, a suffix of length less than $2n$ is appended to the current word $w$ during each round. Clearly, at the end of the process, $w$ becomes a reset word for $\mathrsfs{C\!B}_{n,k}$, and by the construction the length of $w$ is less than $2n\cdot 2\lceil\log_2n\rceil=4n\lceil\log_2n\rceil$. \qed

\end{proof}
}

\paragraph{\textbf{2.2. Random \fg{Sampling and Exhaustive Search.}}}
Every DFA with the transition monoid $T_n$ necessarily has permutation letters that generate the whole symmetric group $S_n$ and a letter of rank $n-1$ (i.e., a letter whose image has $n-1$ elements). It is a well known fact that the converse is also true, i.e., the transition monoid of any automaton with permutation letters generating $S_n$ and a letter of rank $n-1$ is equal to $T_n$, see, e.g., \cite[Theorem~3.1.3]{Ganyushkin&Mazorchuk:2009}.

Relying on a group-theoretic result by Dixon~\cite{Dixon69}, Cameron ~\cite{Cam13} observed that an automaton formed by two permutation letters taken uniformly at random and an arbitrary non-permutation letter is synchronizing with high probability.  We give an extension by using another non-trivial group-theoretical result, namely, the following theorem by Friedman \emph{et al.}~\cite{Frid98}:
\begin{theorem}
\label{thm:friedman}

For every $r$ and $d\ge2$ there is a constant $C$ such that \fg{ for $d$ permutations $\pi_1,\pi_2,\dots,\pi_d$ of $S_n$ taken uniformly at random, the following property $\mathrm{F}_r$ holds with probability tending to 1 as $n\to\infty$}: for any two $r$-tuples of distinct elements in $\{1,2,\dots,n\}$, there is a product of less than $C\log n$ of the $\pi_i$'s \fg{which maps the first} $r$-tuple to the second.

\end{theorem}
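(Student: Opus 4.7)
The plan is to reduce property $\mathrm{F}_r$ to a rapid-mixing statement for the action of $S_n$ on the set $\Omega_r$ of ordered $r$-tuples of distinct elements of $\{1, 2, \ldots, n\}$. Since $|\Omega_r| = n(n-1)\cdots(n-r+1) \leq n^r$, the claimed diameter bound $C\log n$ is logarithmic in $|\Omega_r|$, which is the natural scale at which an expander-type argument should succeed. Both the tuples $u, v \in \Omega_r$ to be connected and the random generating tuple $\boldsymbol{\pi} = (\pi_1, \ldots, \pi_d)$ are quantified in the statement, so the strategy is to control, for each fixed pair $(u, v)$, the probability (over $\boldsymbol{\pi}$) that no short product of $\pi_i$'s sends $u$ to $v$, and then take a union bound over the at most $n^{2r}$ such pairs.

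I would use the second-moment method on word counts. For fixed $u, v$ and $\boldsymbol{\pi}$, let $N_k(u, v; \boldsymbol{\pi})$ denote the number of length-$k$ words $w$ over $\{\pi_1, \ldots, \pi_d\}$ with $u \cdot w = v$. Taking expectation over uniformly random $\boldsymbol{\pi}$, one expects $\mathbb{E}[N_k(u, v; \boldsymbol{\pi})] = d^k / |\Omega_r|$ up to a negligible error, since a product of independent uniformly random permutations is close to uniform on $S_n$ for all sufficiently large $k$. Choosing $k = c \log n$ with $c = c(r, d)$ large, this expectation exceeds any prescribed polynomial in $n$. If one can then prove $\mathrm{Var}[N_k] = o(\mathbb{E}[N_k]^2 \cdot n^{-2r})$, Chebyshev's inequality gives $N_k(u, v; \boldsymbol{\pi}) \geq 1$ with probability $1 - o(n^{-2r})$ for each pair, and the union bound over the $|\Omega_r|^2 \leq n^{2r}$ ordered pairs $(u, v)$ completes the argument.

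The main obstacle is, of course, the variance bound. Computing the second moment requires, for two independent uniformly random length-$k$ words $w, w'$, evaluating correlations such as $\Pr_{\boldsymbol{\pi}}[u \cdot w = v \text{ and } u \cdot w' = v]$. These correlations decompose according to the irreducible representations of $S_n$ appearing in the permutation module $\mathbb{C}[\Omega_r]$, and controlling them requires non-trivial bounds on characters of $S_n$ at generic elements, of Larsen--Shalev or Fomin--Lulov type; equivalently, one must establish a spectral gap for the associated random Schreier graph on $\Omega_r$. Ruling out the possibility that a low-dimensional non-trivial irreducible dominates the second moment and thereby spoils rapid mixing is precisely the delicate content of~\cite{Frid98}, and I would follow that argument rather than attempt to reprove the required character bounds from scratch.
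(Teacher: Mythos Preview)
The paper does not prove this theorem at all: it is quoted verbatim as a result of Friedman \emph{et al.}~\cite{Frid98} and used as a black box to derive Corollary~\ref{cor:random}. There is therefore no ``paper's own proof'' to compare your proposal against.

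As for the proposal itself, it is a reasonable outline of the strategy behind the original result in~\cite{Frid98}, but it is not a self-contained proof. You correctly identify the heart of the matter as a spectral-gap (equivalently, variance) estimate for the random Schreier graph on $\Omega_r$, and you correctly note that this is exactly what~\cite{Frid98} establishes. However, your last sentence then defers to~\cite{Frid98} for that estimate, which makes the argument circular: the theorem you are asked to prove \emph{is} the Friedman \emph{et al.}\ theorem, so invoking~\cite{Frid98} for the hard step is not a proof but a citation. If the intent was merely to explain why the statement is plausible and where the difficulty lies, the proposal does that well; if the intent was to supply an independent proof, the variance/spectral-gap bound must actually be carried out, and that is the entire content of the result.
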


\begin{corollary}
\label{cor:random}
There is a constant $C$ such that the reset threshold of an $n$-state automaton with two random permutation letters and an arbitrary non-per\-mu\-ta\-tion letter does not exceed $Cn\log n$ with probability that tends to 1 as $n\to\infty$.
\end{corollary}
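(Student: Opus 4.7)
The plan is to turn Theorem~\ref{thm:friedman} into a constructive greedy shrinking procedure. Since the third letter~$c$ is not a permutation, it has rank at most $n-1$, so there exist two distinct states $p,q\in\{1,\dots,n\}$ such that $c$ merges them: $p{\cdot}c = q{\cdot}c$. The idea is to reset the automaton by repeatedly bringing a pair of states from the current image onto the pair $(p,q)$ using a short product of the two random permutations, and then applying $c$ to lose one element.

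More precisely, apply Theorem~\ref{thm:friedman} with $r=2$: there is a constant $C_0$ such that, with probability tending to $1$ as $n\to\infty$, the two random permutations $\pi_1,\pi_2$ satisfy property $\mathrm{F}_2$. Conditional on this event, I construct a reset word iteratively. Set $S_0 = Q$ and, while $|S_i|\ge 2$, pick any two distinct states $u_i,v_i\in S_i$; by $\mathrm{F}_2$, there is a word $w_i\in\{\pi_1,\pi_2\}^*$ of length less than $C_0\log n$ whose action sends the ordered pair $(u_i,v_i)$ to $(p,q)$. Set $S_{i+1} = S_i{\cdot}w_ic$. Since $w_i$ acts as a permutation, $|S_i{\cdot}w_i|=|S_i|$, and since $p{\cdot}c = q{\cdot}c$ we get $|S_{i+1}|\le |S_i|-1$.

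After at most $n-1$ such rounds, $S_i$ is a singleton, and the concatenation $w_0 c\, w_1 c\,\cdots$ is a reset word. Each round contributes at most $C_0\log n + 1$ letters, so the total length is at most $(n-1)(C_0\log n + 1) \le C n\log n$ for a suitable constant $C$. Because this entire construction is deterministic given that $\mathrm{F}_2$ holds, the upper bound on the reset threshold holds with probability at least the probability of $\mathrm{F}_2$, which tends to $1$.

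There is no real obstacle; the proof is essentially a reduction. The only points requiring a moment of care are: (i) noting that a non-permutation letter of rank $n-1$ indeed gives a collapsible pair, (ii) checking that Theorem~\ref{thm:friedman} is strong enough to handle ordered pairs (which it is, for $r=2$), and (iii) observing that the construction does not need to re-randomize, so a single success event $\mathrm{F}_2$ suffices to control the reset threshold uniformly in the choice of pairs $(u_i,v_i)$ encountered along the way.
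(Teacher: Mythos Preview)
Your proof is correct and follows essentially the same approach as the paper: both invoke property $\mathrm{F}_2$ from Theorem~\ref{thm:friedman} to route an arbitrary pair of surviving states onto a fixed pair merged by the non-permutation letter, and iterate at most $n-1$ times to obtain a reset word of length $O(n\log n)$. The only cosmetic difference is that the paper first applies the non-permutation letter (starting from $Q{\cdot}a$) before iterating, whereas you start from $Q$; this does not affect the argument or the bound.
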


\begin{proof}
Let $\mathrsfs{A}=\langle Q, \Sigma, \delta \rangle$ stand for the automaton in the formulation of the corollary. We let $a\in\Sigma$ be the non-permutation letter and assume that the two permutation letters in $\Sigma$ have the property $\mathrm{F}_2$ of Theorem~\ref{thm:friedman} for $r=2$ with some constant $C$. By Theorem~\ref{thm:friedman} this assumption holds true with probability that tends to 1 as $n\to\infty$.

There exists two different states $q_1,q_2\in Q$ such that $q_1{\cdot}a=q_2{\cdot}a$. The set $Q{\cdot}a$ contains less than $n$ elements. If $|Q{\cdot}a|=1$, then $a$ is a reset word for $\mathrsfs{A}$. If $|Q{\cdot}a|>1$, take two different states $p_1,p_2\in Q{\cdot}a$. By $\mathrm{F}_2$, there is a product $w$ of less than $C\log n$ of the permutation letters such that $p_i{\cdot}w=q_i$ for $i=1,2$. Now
$|Q{\cdot}awa|<|Q{\cdot}a|$. If $|Q{\cdot}awa|=1$, $awa$ is a reset word for $\mathrsfs{A}$. If $|Q{\cdot}awa|>1$, we apply the same argument to a pair of different states in \fg{ $Q{\cdot}awa$.} Clearly, the process results in a reset word in at most $n-1$ steps while the suffix appended at each step is \fg{of length at most $C\log n+1$}. Hence the length of the reset word constructed this way is at most \fg{$(C+1)n \log n$}.\qed
\end{proof}

Corollary~\ref{cor:random} indicates that one can hardly discover an $n$-state automaton with the transition monoid equal to $T_n$ and sufficiently large reset threshold by a random sampling. Therefore, we performed an exhaustive search among all automata with two permutation letters generating $S_n$ and one letter of rank $n-1$. Our computational results are summarized in Table~1.

\begin{center}
\begin{tabular}{|C{3cm}|C{1cm}|C{1cm}|C{1cm}|C{1cm}|C{1cm}|C{1cm}|}
\hline
Number of states & 2 & 3 & 4 & 5 & 6 & 7 \\
\hline
Reset threshold & 1 & 4 & 8 & 14 & 19 & 27\\
\hline
\end{tabular}
\end{center}
\noindent Table 1: The largest reset thresholds of $n$-state automata two permutation letters generating $S_n$ and one letter of rank $n-1$

\medskip

As $n$ grows, the reset thresholds of the obtained examples become much smaller than $(n-1)^2$. We were unable to derive a series of $n$-state three-letter automata with the transition monoid $T_n$ and quadratically growing reset thresholds. We suspect that the reset threshold of automata in this class is $o(n^2)$.

In the case of unbounded alphabet, for every $n$, we present an $n$-state automaton $\mathrsfs{V}_n$ with the transition monoid $T_n$ such that $\rt(\mathrsfs{V}_n)=\tfrac{n(n-1)}{2}$. The state set of $\mathrsfs{V}_n$ is $Q_n=\{q_0, \dots, q_{n-1}\}$ and the input alphabet  consists of $n$ letters $a_1,\dots, a_n $. The transition function is defined as follows:
\begin{equation*}
     \left\{
     \begin{array}{ll}
     q_i{\cdot}a_j=q_i & \text{ for } 0\leq i<n,\ 1\leq j <n,\ i \neq j,\ i\neq j+1,\ j\neq n,\\
     q_i{\cdot}a_i=q_{i-1} & \text{ for } 0< i\leq n-1,\\
     q_{i}{\cdot}a_{i+1}=q_{i+1} & \text{ for } 0\leq i< n-1, \\
     q_0{\cdot}a_n=q_1{\cdot}a_n=q_0,\\
     q_i{\cdot}a_n=q_i & \text{ for } 2\leq i\leq n-1.
     \end{array}
     \right.
\end{equation*}
Simply speaking, every letter $a_i$ for $i \le n-1$ swaps the states $q_i$ and $q_{i-1}$ and fixes the other states. The letter $a_n$ brings both $q_0$ and $q_1$ to $q_0$ and fixes the other states. The automaton $\mathrsfs{V}_5$ is depicted in Fig.~\ref{Family5N}.

\begin{figure}[ht]
\begin{center}
\scalebox{0.95}{
\begin{tikzpicture}[->,>=stealth',shorten >=1pt,auto,node distance=2cm,
                    semithick]
  \tikzstyle{every state}=[fill=light-gray,draw=none,text=black, scale=1]

  \node[state] (A)                    {$q_0$};
  \node[state]         (B) [ right of=A] {$q_1$};
  \node[state]         (C) [right of=B] {$q_2$};
  \node[state]         (D) [right of=C] {$q_3$};
  \node[state]         (E) [right of=D] {$q_4$};

  \path (A) edge [loop above]  	node { $a_2$, $a_3$, $a_4$, $a_5$} (A)
  			edge    [bend left=10]          	node {$a_1$} (B)
        (B) edge   [bend left=10]	node {$a_1$, $a_5$} (A)
            edge    [bend left=10]          	node {$a_2$} (C)
            edge [loop above]  	node {$a_3$, $a_4$} (B)
        (C) edge      [bend left=10]        	node {$a_2$} (B)
        	edge      [bend left=10]        	node {$a_3$} (D)
         	edge [loop above] 	node {$a_1$, $a_4$, $a_5$} (D)
        (D) edge  		[bend left=10]	  	node {$a_3$} (C)
        	edge       [bend left=10]       	node {$a_4$} (E)
        	edge [loop above]		  	node {\fg{$a_1$, $a_2$, $a_5$}} (A)
		(E) edge  		[bend left=10]	  	node {$a_4$} (D)
        	edge [loop above]		  	node {$a_1$, $a_2$, $a_3$, $a_5$} (A);
\end{tikzpicture}}
\end{center}
\caption{The automaton $\mathrsfs{V}_5$}
\label{Family5N}
\end{figure}

Recall that a state $z$ of an DFA is said to be a \emph{sink state} (or \emph{zero}) if $z{\cdot}a=z$ for every input letter $a$. It is known that every $n$-state synchronizing automaton with zero can be reset by a word of length $\frac{n(n-1)}{2}$, cf.~\cite{Ry97}. To show that this upper bound is tight for each $n$, Rystsov~\cite{Ry97} constructed an $n$-state and $(n-1)$-letter synchronizing automaton $\mathrsfs{R}_n$ with zero which cannot be reset by any word of length less than $\frac{n(n-1)}{2}$. In fact, our automaton $\mathrsfs{V}_n$ is a slight modification of $\mathrsfs{R}_n$ as the latter automaton is nothing but $\mathrsfs{V}_n$ without the letter $a_1$.

\begin{theorem}
\label{ResetVn}
For every $n$, the automaton $\mathrsfs{V}_n$ has $T_n$ as its transition monoid and $\rt(\mathrsfs{V}_n)=\tfrac{n(n-1)}{2}$.
\end{theorem}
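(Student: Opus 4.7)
The plan is to split the statement into three pieces: (i) the transition monoid equals $T_n$; (ii) the upper bound $\rt(\mathrsfs{V}_n) \leq \tfrac{n(n-1)}{2}$; (iii) the matching lower bound.

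For (i), I would observe that the letters $a_1, a_2, \ldots, a_{n-1}$ act as the adjacent transpositions $(q_{i-1}\,q_i)$, and hence generate the full symmetric group $S_n$ on $Q_n$, while $a_n$ has image $Q_n \setminus \{q_1\}$ and thus rank $n-1$. The classical result cited in the paper (Theorem 3.1.3 of Ganyushkin--Mazorchuk) then immediately yields that the transition monoid equals $T_n$. For (ii), the cleanest route is to note that $\mathrsfs{R}_n$ is obtained from $\mathrsfs{V}_n$ by deleting the letter $a_1$; since any reset word for $\mathrsfs{R}_n$ is, letter for letter, a reset word for $\mathrsfs{V}_n$, we get $\rt(\mathrsfs{V}_n) \leq \rt(\mathrsfs{R}_n) = \tfrac{n(n-1)}{2}$ by Rystsov's result.

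For (iii), I would use the natural potential function
\[
\phi(S) \;=\; \sum_{q_i \in S} i \quad \text{for } S \subseteq Q_n,
\]
so that $\phi(Q_n) = \tfrac{n(n-1)}{2}$. A short case analysis verifies that $\phi(S \cdot x) \geq \phi(S) - 1$ for every letter $x$: the permutation letter $a_j$ (with $1 \le j \le n-1$) swaps $q_{j-1}$ and $q_j$, changing $\phi$ by $+1$, $0$, or $-1$ depending on which of these two states lie in $S$; the letter $a_n$ deletes $q_1$ whenever $q_0 \in S$ (dropping $\phi$ by $1$) and otherwise relabels $q_1$ as $q_0$ (also dropping $\phi$ by at most $1$). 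Hence any word $w$ satisfies $|w| \geq \phi(Q_n) - \phi(Q_n \cdot w)$.

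The remaining and really only delicate point is that any shortest reset word $w$ must end with $a_n$ and synchronize $Q_n$ to $\{q_0\}$. This follows from the observation that $a_n$ is the unique non-permutation letter: if the last letter of a shortest $w$ were a permutation, then deleting it would still leave a reset word; so the last letter must strictly reduce the image, which forces it to be $a_n$ applied to a two-element image containing $\{q_0, q_1\}$, giving $Q_n \cdot w = \{q_0\}$ and $\phi(Q_n \cdot w) = 0$. Combined with the potential inequality, this yields $|w| \geq \tfrac{n(n-1)}{2}$, matching the upper bound. The whole argument is short; the only step requiring a moment's thought is this ``last letter must be $a_n$'' claim, which is really the reason the extra generator $a_1$ does not help to synchronize faster.
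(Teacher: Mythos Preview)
Your proposal is correct and follows essentially the same line as the paper's own proof: the same generation argument for $T_n$ (the paper uses ``full cycle plus transposition'' where you invoke ``adjacent transpositions generate $S_n$'', which is equivalent), the same reduction to Rystsov's $\mathrsfs{R}_n$ for the upper bound, and the same index-sum potential $\phi(S)=\sum_{q_i\in S} i$ together with the observation that a shortest reset word must end in $a_n$ and hence terminate at $\{q_0\}$ for the lower bound.
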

\begin{proof}
The letters $a_1,\ldots,a_{n-1}$ generate $S_n$ because the product $a_1\cdots a_{n-1}$ is a full cycle and any full cycle together with any transposition generates $S_n$. Since the letter $a_n$ has rank $n-1$, it together with $a_1,\ldots,a_{n-1}$ generates $T_n$.

The automaton $\mathrsfs{V}_n$ is synchronizing because so is the restricted automaton $\mathrsfs{R}_n$, and $\rt(\mathrsfs{V}_n)\le\tfrac{n(n-1)}{2}$ because every reset word for $\mathrsfs{R}_n$ resets $\mathrsfs{V}_n$ as well. It remains to verify that the length of any reset word for $\mathrsfs{V}_n$ must be at least $\tfrac{n(n-1)}{2}$. Let $w$ be a reset word of minimum length for $\mathrsfs{V}_n$. Since $a_n$ is the only non-permutation letter, we must have $w=w'a_n$ for some $w'$ such that $|Q_n{\cdot}w'|>1$. This is only possible when $Q_n{\cdot}w'=\{q_0,q_1\}$ whence $Q_n{\cdot}w=\{q_0\}$.
Consider the function $f$ from the set of all non-empty subsets of $Q_n$ into the set of non-negative integers defined as follows: if $S=\{q_{s_1}, \dots, q_{s_t}\}$, then $f(S)=\sum_{i=1}^t s_i$. Clearly, $f(\{q_0\})=0$ and $f(Q_n)=\tfrac{n(n-1)}{2}$. For any set $S$ and any letter $a_j$, we have $f(S{\cdot}a_j)\geq f(S)-1$ since each letter only exchanges two adjacent states or maps $q_1$ and $q_0$ to $q_0$. Thus, when we apply the word $w$ letter-by-letter, the value of $f$ after the application of the prefix of $w$ of length $i$ cannot be less than $\tfrac{n(n-1)}{2}-i$. Hence, to reach the value $0$, we need at least $\tfrac{n(n-1)}{2}$ letters.\qed
\end{proof}

\paragraph{\textbf{2.3. Upper \fg{Bound on the Reset Threshold.}}}

We now provide a quadratic upper bound on the reset words of automata with the transition monoid equal to $T_n$. Our proof is inspired by the method of Rystsov~\cite{rystsov2000estimation} adapted to our case.

Let $\mathrsfs{A}=\langle Q, \Sigma, \delta \rangle$ be a DFA. Given a proper non-empty subset $R\subset Q$ and a word $w$ over $\Sigma$, we say that $R$ \emph{can be extended by} $w$ \fg{if the cardinality of the set} $Rw^{-1}=\{q\in Q\mid q{\cdot}w\in R\}$ is greater than $|R|$. Now assume that $|Q|=n$ and the transition monoid of $\mathrsfs{A}$ coincides with the full transformation monoid $T_n$. Then there is a letter $x$ of rank $n-1$. The set $Q\setminus Q{\cdot}x$ consists of a unique state, which is called the \emph{excluded state} for $x$ and is denoted by $\excl(x)$. \fg{Furthermore}, the set $Q{\cdot}x$ contains a unique state $p$ such that $p=q_1{\cdot} x=q_2{\cdot}x$ for some $q_1\ne q_2$; this state $p$ is called the \emph{duplicate state} for $x$ and is denoted by $\dupl(x)$. We notice that a non-empty subset $R\subset Q$ can be extended by $x$ if and only if $\dupl(x)\in R$ and $\excl(x)\notin R$. Moreover, if a word $w$ \fg{is a product of permutation letters}, $R$ can be extended by the word $wx$ if and only if $\dupl(x)\in Rw^{-1}$ and $\excl(x)\notin Rw^{-1}$. To better understand which extensions are possible, we construct a series of directed graphs (digraphs) $\Gamma_i$, $i=0,1,\dotsc$, with the set $Q$ as the vertex set.

The digraph $\Gamma_0$ has the set $E_0 = \{(\excl(x),\dupl(x))\}$ as its edge set. Let $\Pi$ be the set of permutation letters of $\mathrsfs{A}$. Notice that $\Pi$ generates the symmetric group $S_n$. By $\Pi^{i}$ we denote the set of words of length at \fg{most $i$ over the letters in $\Pi$}. The digraph $\Gamma_i$ for $i>0$ has the edge set $E_i = \{(\excl(x){\cdot}w,\dupl(x){\cdot}w)\mid w\in\Pi^i\}$. The digraphs   $\Gamma_i$, $i=0,1,\dotsc$, form a sort of stratification for the graph $\Gamma_\infty$ with the edge set $E_\infty=\cup_{i=0}^\infty E_i$; the latter digraph has been studied  in~\cite{rystsov2000estimation} and~\cite{BondarVolkov16} (in the context of arbitrary completely reachable automata). Observe that none of the digraphs $\Gamma_i$, $i=0,1,\dotsc$, have loops.

Recall that a digraph is said to be \emph{strongly connected} if for every pair of its vertices, there exists a directed path from the first vertex to the second. \ffg{We need the two following lemmas.} 

\begin{lemma}
\label{lem:extension}
If the digraph $\Gamma_k$ is strongly connected, then every proper non-empty subset in $Q$ can be extended by a word of length at most $k+1$.
\end{lemma}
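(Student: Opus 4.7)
The plan is to translate extendability into a statement about edges that cross the cut between $R$ and $Q\setminus R$ in $\Gamma_k$, and then invoke strong connectivity. Concretely, given a proper non-empty $R\subset Q$, I will look for an extending word of the form $wx$ with $w\in\Pi^k$ (a word over permutation letters of length at most $k$). This shape is forced up to reordering: permutation letters preserve preimage cardinalities, so any extending word must involve the rank-$(n-1)$ letter $x$, and it is natural to place $x$ at the end.

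The first step uses the characterization recorded just before the lemma statement: $R$ is extended by $wx$ iff $\dupl(x)\in Rw^{-1}$ and $\excl(x)\notin Rw^{-1}$, i.e.\ iff $\dupl(x){\cdot}w\in R$ and $\excl(x){\cdot}w\notin R$. By the very definition of the digraph $\Gamma_k$, the pair $(\excl(x){\cdot}w,\dupl(x){\cdot}w)$ is an edge of $\Gamma_k$ for every $w\in\Pi^k$, and conversely every edge of $\Gamma_k$ has this form. So what I need to produce is precisely an edge of $\Gamma_k$ whose tail lies in $Q\setminus R$ and whose head lies in $R$.

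The second step extracts such an edge from strong connectivity by the standard cut argument: since $R$ is proper and non-empty, both $R$ and $Q\setminus R$ contain at least one vertex; fix $s\in Q\setminus R$ and $r\in R$ and take a directed path from $s$ to $r$ in $\Gamma_k$, which exists by hypothesis. The first edge along this path whose head lies in $R$ has its tail still outside $R$ (otherwise it would not be the first such edge), and thus crosses from $Q\setminus R$ into $R$. Reading off a word $w\in\Pi^k$ corresponding to this edge and appending $x$, I obtain an extending word $wx$ of length at most $k+1$.

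The only delicate point is the translation step: one must be careful with the convention for $Rw^{-1}$ and the orientation of the edges of $\Gamma_k$ to confirm that the characterization in terms of $\dupl$ and $\excl$ matches the statement ``edge from $Q\setminus R$ to $R$''. Once this correspondence is in place, the lemma reduces to the well-known fact that every strongly connected digraph has an edge crossing every non-trivial vertex cut, and nothing more needs to be said.
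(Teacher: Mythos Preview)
Your argument is correct and matches the paper's proof: use strong connectivity to find an edge of $\Gamma_k$ from $Q\setminus R$ into $R$, then read off the corresponding $w\in\Pi^k$ to obtain an extending word of length at most $k+1$. One minor slip: the extending word should be $xw$, not $wx$ (the condition $\dupl(x)\in Rw^{-1}$, $\excl(x)\notin Rw^{-1}$ characterizes extendability by $xw$, since $R(xw)^{-1}=(Rw^{-1})x^{-1}$), but the paper's own exposition contains the same typo and its proof, like yours in substance, uses $xw$.
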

\jv{
\begin{proof}
let $R$ be a proper non-empty subset in $Q$. If $\Gamma_k$ is strongly connected, there exists an edge $(q,p)\in E_k$ that connects $Q\setminus R$ and $R$ in the sense that $q\in Q\setminus R$ while $p\in R$. As $(q,p)\in E_k$, there exists a word $w\in\Pi^k$ such that $(q,p)=(\excl(x){\cdot}w,\dupl(x){\cdot}w)$. Then $\dupl(x)\in Rw^{-1}$ and $\excl(x)\notin Rw^{-1}$, whence the word $xw$ extends $R$ and has length at most $k+1$.\qed
\end{proof}
}
\begin{lemma}
\label{lem:extension1}
The digraph $\Gamma_{2n-3}$ is strongly connected.
\end{lemma}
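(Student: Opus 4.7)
The plan is to prove that $\Gamma_{2n-3}$ is strongly connected by constructing, for every ordered pair of distinct states $(u,v)$, a short directed path from $u$ to $v$ in $\Gamma_{2n-3}$ using only edges realised by $\Pi$-words of length at most $2n-3$. The overall idea is to compose two ``transport'' moves: one that brings $\excl(x)$ to a target state, and another that keeps this state fixed while repositioning $\dupl(x)$.

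The first ingredient is elementary. Since $\Pi$ generates the transitive group $S_n$, the Schreier graph of the action of $S_n$ on $Q$ via $\Pi$ has $n$ vertices and is connected, hence has diameter at most $n-1$. Therefore, for every $q\in Q$ there is $\alpha_q\in\Pi^{n-1}$ with $\excl(x)\cdot\alpha_q=q$, and symmetrically $\beta_q\in\Pi^{n-1}$ with $\dupl(x)\cdot\beta_q=q$. This immediately populates $\Gamma_{n-1}$ with an out-edge at every vertex $q$ (going to $\dupl(x)\cdot\alpha_q$) and an in-edge at every vertex.

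The next step is to upgrade these degree-one bounds to genuine strong connectivity. For distinct $u,v\in Q$, I would look for a transit state $w$ and two edge-realising words of the form $\sigma_1=\alpha_u\gamma_1$ and $\sigma_2=\alpha_w\gamma_2$, where each $\gamma_i\in\Pi^{n-2}$ is a shorter adjustment that fixes the current image of $\excl(x)$ while repositioning the current image of $\dupl(x)$ to the required value. Each composite has length at most $(n-1)+(n-2)=2n-3$, fitting within the budget, and together the two edges $(u,w)$ and $(w,v)$ yield a path from $u$ to $v$ in $\Gamma_{2n-3}$. The flexibility in choosing $w$, together with the transitivity of the point-stabiliser $\mathrm{Stab}_{S_n}(u)\cong S_{n-1}$ on $Q\setminus\{u\}$, is the source of the construction.

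The main obstacle is the length bound on the adjustment word $\gamma_1$. Although $\mathrm{Stab}_{S_n}(u)$ acts transitively on $Q\setminus\{u\}$ in principle, its Schreier generators inherited from $\Pi$ can individually be as long as $2n-3$, so not every stabiliser element fits within the adjustment budget of $n-2$. I expect the proof to circumvent this by forgoing a free choice of $w$: instead, $w$ ranges over the set reachable by adjustment words of length at most $n-2$, and one argues that as $u$ varies, the resulting edge family already suffices for strong connectivity (if necessary, permitting directed paths of length three rather than two to close any remaining gaps, still within the length budget). This delicate combinatorial accounting, in the spirit of the Rystsov-style method from~\cite{rystsov2000estimation} alluded to earlier, is the heart of the proof.
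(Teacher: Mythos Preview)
Your proposal is not a proof but a plan with an explicitly acknowledged gap, and the gap is real. The heart of your approach is the ``adjustment'' step: given that $\alpha_u$ has brought $\excl(x)$ to $u$, you need some $\gamma_1\in\Pi^{n-2}$ fixing $u$ and sending $\dupl(x)\cdot\alpha_u$ to a prescribed (or at least sufficiently flexible) target. You correctly note that Schreier generators of the point stabiliser can already have length up to $2n-3$, so there is no reason such a $\gamma_1$ of length $\le n-2$ exists. Your fallback---``let $w$ range over whatever is reachable and argue the edge family suffices, possibly with paths of length three''---is exactly the missing argument; nothing you have written explains why the reachable set is large enough, and allowing a third edge does not obviously help, since each edge still requires its own adjustment within the same budget. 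As it stands the proposal reduces the lemma to an unproved combinatorial claim of comparable difficulty.

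The paper's proof takes a completely different route that sidesteps the stabiliser-length issue. It never tries to build a specific edge $(u,v)$. Instead it first observes (as you also do) that every vertex of $Q\cdot x$ has an outgoing edge in $\Gamma_{n-1}$, hence $\Gamma_{n-1}$ contains a nontrivial directed cycle and therefore has at most $n-1$ strongly connected components. The key step is then a \emph{component-merging} argument: since $\Pi$ generates $S_n$, no nontrivial partition of $Q$ is $\Pi$-invariant, so some letter $a\in\Pi$ sends a non-singleton component so that its image meets two distinct components. Because $(p,q)\in E_k$ implies $(p\cdot a,q\cdot a)\in E_{k+1}$, the images of components stay strongly connected inside $\Gamma_{k+1}$, and the overlap forces two old components to merge. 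One letter reduces the number of components by at least one, so after at most $n-2$ additional letters $\Gamma_{(n-1)+(n-2)}=\Gamma_{2n-3}$ is strongly connected. This argument uses only single letters at each stage and the primitivity-type fact that $S_n$ preserves no nontrivial partition; it never needs short words in a point stabiliser.
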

\jv{
\begin{proof}
We start with showing that the digraph $\Gamma_{n-1}$ contains an oriented cycle.

Consider the underlying digraph $\Delta$ of the automaton $\langle Q, \Pi, \delta|_{Q\times\Pi}\rangle$, i.e., the digraph with the vertex set $Q$ and the edge set $\{(q,q{\cdot}a)\mid q\in Q,\,a\in\Pi\}$. This digraph is strongly connected since $\Pi$ generates the whole symmetric group $S_n$. Therefore, for every $q\in Q{\cdot}x$, there exists a directed path in $\Delta$ from $\excl(x)$ to $q$. If one takes such a path $\excl(x)\xrightarrow{a_1}\cdots\xrightarrow{a_\ell}q$ of minimum length, it does not traverse any vertex in $Q$ more than once, whence the length $\ell$ of the path is at most $n-1$. Thus, the word $u=a_1\cdots a_\ell$ belongs to $\Pi^{n-1}$ and the pair $(\excl(x){\cdot}u,\dupl(x){\cdot}u)$ is an outgoing edge of the vertex $q=\excl(x){\cdot}u$ in the digraph $\Gamma_{n-1}$. We see that every state in $Q{\cdot}x$ has an outgoing edge in $\Gamma_{n-1}$. Now, we can walk along the edges of $\Gamma_{n-1}$, starting at $\excl(x)$, which has the outgoing edge $(\excl(x),\dupl(x))$, until we reach an already visited state, thus getting an oriented cycle in the graph.

If $\Gamma=(V,E)$ is a digraph, we say that a vertex $v'\in V$ is \emph{reachable} from a vertex $v\in V$ if either $v'=v$ or there is a directed path from $v$ to $v'$. The mutual reachability relation is an equivalence on the set $V$, and the digraphs induced on the classes of the mutual reachability relation are either \scn\ or singletons (i.e., digraphs with 1 vertex and no edge).  Slightly abusing terminology, we call these induced digraphs (including singletons) the \emph{\scn\ components} of the digraph $\Gamma$.

Consider the strongly connected components of the digraph $\Gamma_{n-1}$ and let $C_1,\dots,C_m$ denote their vertex sets. Without any loss we may assume that $|C_1|\ge|C_2|\ge\dots\ge|C_m|$. Observe that $m<n$ since $\Gamma_{n-1}$ contains an oriented cycle which is not a loop whence at least one \scc\ is non-singleton. (Recall that digraphs of the form $\Gamma_i$ are loopless.) If $m=1$, then already the digraph $\Gamma_{n-1}$ is strongly connected, and we are done. Otherwise we analyze the evolution of the partition of $\Gamma_k$ with $k\ge n-1$ into \scn\ components under the action of the letters in $\Pi$.
Since $\Pi$ generates the symmetric group $S_n$, it cannot preserve any non-trivial partition of $Q$. Thus, there is a non-singleton component $C$ among $C_1,\dots,C_m$ and a letter $a$ in $\Pi$ whose action sends two elements of $C$ to different components, i.e. $C{\cdot}a\cap C_i\ne\varnothing$ and $C{\cdot}a\cap C_j\ne\varnothing$ for some $C_i \neq C_j$.

By the definition of the sets $E_k$, if $(p,q)\in E_{n-1}$, then $(p{\cdot}a,q{\cdot}a)\in E_{n}$. Therefore each edge from $E_{n-1}$ that connects some vertices in $C_s$, $s=1,\dots,m$, translates into an edge from $E_n$ that connects the images of these vertices in $C_s{\cdot}a$. Therefore, the digraphs of $\Gamma_n$ induced on the sets $C_1{\cdot}a,\dots,C_m{\cdot}a$ are either \scn\ or singletons.
In particular, the digraph of $\Gamma_n$ induced on $C{\cdot}a$ is \scn. Since $C{\cdot}a\cap C_i\ne\varnothing$ and $C{\cdot}a\cap C_j\ne\varnothing$, the digraph of $\Gamma_n$ induced on the set $C{\cdot}a\cup C_i\cup C_j$ also is strongly connected. This implies that the number $m'$ of strongly connected components in $\Gamma_n$ is less than $m$. If $\Gamma_n$ is not yet \scn, the same reasoning applied to its \scn\ components, shows that the number of strongly connected components in $\Gamma_{n+1}$ is less than $m'$, etc.

Since at each step the number of strongly connected components is reduced at least by $1$, we conclude that we reach a \scn\ digraph in at most $n-2$ steps. Therefore, $\Gamma_{2n-3}$ is strongly connected.\qed
\end{proof}
}
\begin{theorem}
\label{ResetBabaiCerny}
Let $\mathrsfs{A}$ be an $n$-state automaton with the transition monoid equal to $T_n$. The reset threshold of $\mathrsfs{A}$ is at most $2n^2 - 6n + 5$.
\end{theorem}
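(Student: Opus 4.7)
The plan is to build a reset word iteratively by reversing a chain of extensions, exploiting the two preparatory lemmas to bound the total cost. Fix a letter $x$ of rank $n-1$, and put $d=\dupl(x)$. Start with the singleton $R_0=\{d\}$. Observe that $R_0$ is extended \emph{by $x$ alone}, since $\dupl(x)\in R_0$ and $\excl(x)\notin R_0$; concretely, $R_1=R_0x^{-1}=\{q\in Q\mid q{\cdot}x=d\}$ has exactly two elements, because $d$ is the duplicate state for $x$. Thus the first extension costs only one letter, which is the key bookkeeping trick behind the constant $5$ in the bound.

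Next, iterate. As long as $R_i$ is a proper non-empty subset of $Q$, Lemma~\ref{lem:extension} together with Lemma~\ref{lem:extension1} produces a word $v_{i+1}\in\Pi^{2n-3}x$ of length at most $2n-2$ such that $R_{i+1}:=R_iv_{i+1}^{-1}$ strictly contains $R_i$. Since $|R_i|$ increases by at least $1$ at each extension and is bounded above by $n$, after at most $n-2$ further steps we reach some index $k\le n-1$ with $R_k=Q$.

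Setting $W=v_kv_{k-1}\cdots v_2\,x$, a straightforward induction on $j$ shows $R_j\cdot v_jv_{j-1}\cdots v_2\,x\subseteq R_0=\{d\}$; taking $j=k$ and using $R_k=Q$ yields $Q\cdot W=\{d\}$, so $W$ is a reset word of $\mathrsfs{A}$. Its length is at most
\[
|x|+\sum_{i=2}^{k}|v_i|\;\le\;1+(n-2)(2n-2)\;=\;2n^2-6n+5.
\]

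I do not expect a substantial obstacle here: the combinatorial heavy lifting is already done by Lemmas~\ref{lem:extension} and~\ref{lem:extension1}, so the remaining work is purely organizational. The only subtlety is to keep the composition order straight: because each extension step takes preimages, the words $v_i$ have to be \emph{prepended} to the current suffix rather than appended, and the constant $5$ in the bound critically depends on starting the process from $\{\dupl(x)\}$ so that the very first extension can be realized by the single letter $x$ instead of a word of length up to $2n-2$.
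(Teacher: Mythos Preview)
Your proof is correct and essentially identical to the paper's: start from $\{\dupl(x)\}$, extend once by the single letter $x$, then apply Lemmas~\ref{lem:extension} and~\ref{lem:extension1} at most $n-2$ times with words of length at most $2n-2$, and concatenate in reverse order. One cosmetic slip: the extending words furnished by Lemma~\ref{lem:extension} have the form $x\,\Pi^{2n-3}$ (letter $x$ first, permutation word second), not $\Pi^{2n-3}x$ as you wrote, but since you only use the length bound this does not affect the argument.
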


\begin{proof}
Let $x$ be a letter of rank $n-1$ and $h=\dupl(x)$. We extend the set $\{h\}$ by $x$, getting a subset $R_2$ with $|R_2|\ge2$. Lemmas~\ref{lem:extension} and~\ref{lem:extension1} imply that proper non-empty subsets in $Q$ can be extended by words of length at most $2n-2$. Starting with $R_2$, we extend subsets until we reach the full state set. Let $u_i$ be the word of length at most $2n-2$ used for the $i$-th of these extensions and let $m$ be the number of the extensions. Observe that $m\le n-2$. Clearly, the word
$u_{m}\cdots u_1x$ resets  $\mathrsfs{A}$ and has the length at most $1+(n-2)(2n-2) = 2n^2 - 6n + 5$.\qed
\end{proof}

\begin{remark}
Let $\mathrsfs{A}=\langle Q, \Sigma, \delta \rangle$ be an $n$-state DFA that has a letter of rank $n-1$, and let $P$ be the subgroup of the symmetric group $S_n$ generated by the permutation letters from $\Sigma$. Our proof of Theorem~\ref{ResetBabaiCerny} actually works in the case if $P$ is a \emph{2-transitive} group, that is, $P$ acts transitively on the set of ordered pairs of $Q$.
\end{remark}

\section{Bounds \fg{on the Diameter of the Pair Digraph}}
\label{sec:2trans}
In this section we present a lower bound on the largest diameter of the pair digraph $P(A)$ for $A \subseteq S_n$. We proceed by presenting subsets $A \subseteq S_n$ for every odd $n$ whose diameter is $\tfrac{n^2}{4} + o(n^2)$. In order to simplify the presentation, we mostly use automata terminology and describe the corresponding examples as \fg{a family of automata} $\mathrsfs{F}_n=\langle Q_n, A, \delta \rangle$ (the input letters of $\mathrsfs{F}_n$ form the subset $A$). We let $Q_n=\{q_1,\dots,q_n\}$ and denote pairs of states such as $\{q_i,q_j\}$ simply by $q_iq_j$.

\begin{figure}[ht]
\begin{center}

\fg{
\scalebox{0.8}{
\begin{tikzpicture}[->,>=stealth',shorten >=1pt,auto,node distance=1.5cm,
                    semithick]
  \tikzstyle{every state}=[fill=light-gray,draw=none,text=black, scale=1]

   \node[state] (A)                    {$q_{1}$};
  \node[state]         (B) [above right of=A] {$q_{2}$};
  \node[state]         (C) [ below right of =B] {$q_{3}$};
  \node[state]         (D) [below left of =C] {$q_{4}$};
  \node[state]         (E) [right of =C] {$q_{5}$};
   \node[state]         (G) [left of =A] {$q_{6}$};

       \node   [state]       (K) [right of =E] {$q_{7}$};

  \path (A) edge           node [swap]{$a$} (B)
  			 edge      [bend left=20]       node {$b$} (G)
        (B) edge            node [swap] {$a$} (C)
        	edge       [loop left]       node {$b$} (B)
        (C) edge             node [swap]{$a$} (D)
        	edge     [bend left=20]         node {$b$} (E)
        (D) edge  	         node [swap]{$a$} (A)
       		edge    [loop left]          node {$b$} (A)

        (E) edge    [bend left=20]       node {$b$} (C)
        edge     [bend left=20]     node {$a$} (K)
        (G)edge    [bend left=20]       node {$b$} (A)
        edge    [loop left]          node {$a$} (A)
        (K)edge     [bend left=20]      node {$a$} (E)
        edge    [loop right]          node {$b$} (A);
\end{tikzpicture}
}
}
\end{center}
\caption{The automaton $\mathrsfs{F}_7$}
\label{VeryHardFamily7}
\end{figure}

The automaton $\mathrsfs{F}_7$ shown in Fig.~\ref{VeryHardFamily7} is the first of the family $\mathrsfs{F}_n$. The digraph of pairs of its states is shown in Fig.~\ref{VeryHardFamily7SQ}. One can verify that the shortest word mapping $q_2q_4$ to $q_4q_7$ has length 15. 

\begin{figure}[hb]
\begin{center}
\scalebox{0.8}{
\begin{tikzpicture}[->,>=stealth',shorten >=1pt,auto,node distance=1.5cm,
                    semithick]
  \tikzstyle{every state}=[fill=light-gray,draw=none,text=black, scale=1]

  \node[state] 		(A)                    {$q_{2}q_{4}$};
  \node[state]         (B) [left of=A] {$q_{1}q_{3}$};
  \node[state]         (C) [ below of =B] {$q_{5}q_{6}$};
  \node[state]         (D) [below of =C] {$q_{6}q_{7}$};

  \node[state]         (E) [right of =D] {$q_{1}q_{7}$};
  \node[state]         (F) [below right of =E] {$q_{2}q_{5}$};
  \node[state]         (G) [above right of =F] {$q_{3}q_{7}$};
  \node[state]         (H) [above left of =G] {$q_{4}q_{5}$};

  \node[state]         (I) [right of =G] {$q_{5}q_{7}$};

    \node[state]         (J) [right= 3.7cm of F] {$q_{2}q_{3}$};
  \node[state]         (K) [right = 3.7cm of H] {$q_{3}q_{4}$};
  \node[state]         (L) [right of =K] {$q_{1}q_{4}$};
  \node[state]         (M) [right of =J] {$q_{1}q_{2}$};

    \node[state]         (N) [right= 2cm of M] {$q_{2}q_{6}$};
  \node[state]         (O) [above left of =N] {$q_{1}q_{6}$};
    \node[state]         (P) [right = 2cm of L] {$q_{4}q_{6}$};
  \node[state]         (Q) [above right of =N] {$q_{3}q_{6}$};

 \node[state]         (R) [above of =Q] {$q_{1}q_{5}$};
  \node[state]         (S) [above left of =R] {$q_{2}q_{7}$};
    \node[state]         (T) [above right of =S] {$q_{3}q_{5}$};
  \node[state]         (U) [below right of =T] {$q_{4}q_{7}$};

  \path (A) edge     [bend left=20]     node {$a$} (B)
  			edge          [loop right]    node {$b$} (B)
         (B) edge    [bend left=20]     node {$a$} (A)
  			edge     [bend left=20]   node {$b$} (C)
            (C) edge    [bend left=20]     node {$a$} (D)
  			edge     [bend left=20]   node {$b$} (B)
		(D) edge    [bend left=20]     node {$a$} (C)
  			edge     [bend left=20]   node {$b$} (E)
        (E) edge        node {$a$} (F)
  			edge     [bend left=20]   node {$b$} (D)
        (F) edge        node {$a$} (G)
  			edge     [bend left=3]   node {$b$} (J)
        (G) edge        node {$a$} (H)
  			edge     [bend left=5]   node {$b$} (I)
       	(H) edge        node {$a$} (E)
  			edge     [bend left=3]   node {$b$} (K)
        (I) edge     [bend left=10]     node {$b$} (G)
  			edge          [loop right]    node {$a$} (B)
        (J) edge        node [swap]{$a$} (K)
  			edge     [bend left=3]   node {$b$} (F)
        (K) edge        node [swap] {$a$} (L)
  			edge     [bend left=3]   node {$b$} (H)
        (L) edge        node [swap]{$a$} (M)
  			edge     [bend left=5]   node {$b$} (P)
        (M) edge        node [swap]{$a$} (J)
  			edge     [bend left=5]   node {$b$} (N)
        (N) edge        node {$a$} (Q)
  			edge     [bend left=5]   node {$b$} (M)
        (O) edge        node {$a$}(N)
  			edge     [loop left]   node {$b$} (N)
        (P) edge        node {$a$} (O)
  			edge     [bend left=10]   node {$b$} (L)
       (Q) edge        node  {$a$} (P)
  			edge     [bend left=10]   node {$b$} (R)
      (R) edge        node [swap]{$a$} (S)
  			edge     [bend left=10]   node {$b$} (Q)
      (S) edge        node [swap]{$a$} (T)
  			edge     [loop left]   node {$b$} (Q)
      (T) edge        node [swap]{$a$} (U)
  			edge     [loop left]   node {$b$} (Q)
      (U) edge        node [swap]{$a$} (R)
  			edge     [loop right]   node {$b$} (Q);
\end{tikzpicture}
}
\end{center}
\caption{The pair digraph of $\mathrsfs{F}_7$}
\label{VeryHardFamily7SQ}
\end{figure}

The automata of the family are obtained recursively, starting with $\mathrsfs{F}_7$. From $\mathrsfs{F}_n$, we construct $\mathrsfs{F}_{n+2}$. The effect \fg{of the letters is the same} for the states $q_1, \dots, q_{n-2}$ in $\mathrsfs{F}_n$ and $\mathrsfs{F}_{n+2}$. The effect of the letters $a$ and $b$ at the states $q_{n-1}$, $q_n$, $q_{n+1}$ and $q_{n+2}$ is defined as follows: the letters mapping $q_{n-1}$ and $q_n$ to themselves in $\mathrsfs{F}_n$ exchange $q_{n-1}$ with $q_{n+1}$ and $q_{n}$ with $q_{n+2}$ respectively in $\mathrsfs{F}_{n+2}$. The other letter maps $q_{n+1}$ and $q_{n+2}$ to themselves and $q_{n-1}$, $q_{n}$ to $q_{n-3}$ and respectively $q_{n-2}$. The result is shown in Fig.~\ref{VeryHardFamily2k5odd} (for $n\equiv 3\pmod 4$), in which $k$ stands for $\tfrac{n-5}2$.

\fg{
\begin{figure}[ht]
\begin{center}
\scalebox{0.80}{
\begin{tikzpicture}[->,>=stealth',shorten >=1pt,auto,node distance=1.3cm,
                    semithick]
  \tikzstyle{every state}=[fill=light-gray,draw=none,text=black, scale=1]

   \node[state] (A)                    {$q_{1}$};
  \node[state]         (B) [above right of=A] {$q_{2}$};
  \node[state]         (C) [ below right of =B] {$q_{3}$};
  \node[state]         (D) [below left of =C] {$q_{4}$};  \node[state]         (E) [right of =C] {$q_{5}$};
   \node[state]         (G) [left of =A] {$q_{6}$};

       \node         (K) [right of =E] {$\cdots$};
       \node         (L) [left of =G] {$\cdots$};

       \node[state]         (F) [right of =K] {$q_{2k+3}$};
       \node[state]         (H) [right=0.4cm of F] {$q_{2k+5}$};
       \node[state]         (J) [left of =L] {$q_{2k+4}$};

  \path (A) edge              node [swap] {$a$} (B)
  			 edge      [bend left=20]       node {$b$} (G)
        (B) edge             node [swap] {$a$} (C)
        	edge       [loop left]       node {$b$} (B)
        (C) edge             node [swap] {$a$} (D)
        	edge     [bend left=20]         node {$b$} (E)
        (D) edge  	         node [swap] {$a$} (A)
       		edge    [loop left]          node {$b$} (A)

        (E) edge    [bend left=20]       node {$b$} (C)
        edge     [bend left=20]     node {$a$} (K)
        (F)edge    [bend left=20]       node {$b$} (K)
        edge      [bend left=20]      node {$a$} (H)
        (G)edge    [bend left=20]       node {$b$} (A)
        edge      [bend left=20]      node {$a$} (L)
        (H)edge    [bend left=20]       node {$a$} (F)
        edge     [loop right]      node {$b$} (E)
        (K)edge    [bend left=20]       node {$b$} (F)
        edge     [bend left=20]      node {$a$} (E)
        (L)edge    [bend left=20]       node {$b$} (J)
        edge     [bend left=20]      node {$a$} (G)
        (J)edge    [bend left=20]       node {$b$} (L)
        edge     [loop left]     node {$a$} (E);

\end{tikzpicture}
}
\end{center}
\caption{The automaton $\mathrsfs{F}_{2k+5}$, with $k$ odd}
\label{VeryHardFamily2k5odd}
\end{figure}
}

\begin{theorem}
\label{DiameterF}
For odd $n\geq 7$, the diameter of the pair digraph of the automaton $\mathrsfs{F}_n$ is at least $\tfrac{n^2}4+o(n^2)$.
\end{theorem}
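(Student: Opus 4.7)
The strategy is to exhibit, for each odd $n \ge 7$, explicit pairs $P_n$ and $Q_n$ whose distance in the pair digraph of $\mathrsfs{F}_n$ grows as $n^2/4+o(n^2)$, and to prove the bound by induction on $n$ along the recursive construction of $\mathrsfs{F}_{n+2}$ from $\mathrsfs{F}_n$. Guided by the base case, where the diameter of $\mathrsfs{F}_7$ is realized by $q_2q_4\to q_4q_7$, I would take $P_n=q_2q_4$ and $Q_n$ to be a pair involving the freshly added ``tip'' state of the right chain, for example $q_4q_n$. The key structural fact I plan to exploit is that both letters act \emph{locally} on pairs: with respect to the natural distance along the two chains extending from the core $\{q_1,q_2,q_3,q_4\}$, each letter either slides one coordinate by one step along a chain, swaps a coordinate across the boundary between core and a chain, or acts as the core $4$-cycle. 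Consequently, transporting a single coordinate from a position near the tip of a chain back down to the core requires $\Theta(n)$ letters.

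The inductive step aims at a recurrence of the form $D_{n+2}\ge D_n + \tfrac{n-1}{2} - O(1)$, where $D_n$ denotes the distance from $P_n$ to $Q_n$ in the pair digraph of $\mathrsfs{F}_n$. Given a shortest word $w$ realizing $P_{n+2}\to Q_{n+2}$ in $\mathrsfs{F}_{n+2}$, I would decompose $w=w_1 u w_2$ where $w_1$ is the longest prefix whose action keeps the pair entirely inside the state set $\{q_1,\dots,q_n\}$ of the smaller automaton. The plan is then to argue (a) that $w_1$ connects $P_{n+2}$ to a pair that, after discarding the effect of any moves touching $q_{n+1}$ or $q_{n+2}$, is a legitimate $\mathrsfs{F}_n$-target, so $|w_1|\ge D_n-O(1)$ by the induction hypothesis; and (b) that the suffix $uw_2$ is forced to bring a coordinate all the way out to $q_{n+2}$, which by locality costs at least $\tfrac{n-1}{2}$ additional letters. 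Summing the recurrence from the computed base $D_7\ge 15$ then yields $D_n\ge n^2/4+o(n^2)$.

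The main obstacle is the ``no-shortcut'' argument. Since the alphabet contains permutation letters, the pair digraph is strongly connected and \emph{a priori} admits many paths between any two pairs; one must rule out the possibility that a clever interleaving of the swap letter and the cyclic letter brings $q_{n+2}$ into the pair much faster than the naive local bound suggests. To address this, I would introduce a potential $\Phi(\{q_i,q_j\})$ equal to the sum of the chain-depths of $q_i$ and $q_j$, measured from the core, and verify that each letter changes $\Phi$ by a bounded amount. A parity refinement that respects the split of the right and left chains into ``odd'' and ``even'' sub-ladders would be needed to prevent the swap letter from compensating several step changes at once. Coupling this potential argument with the inductive decomposition should complete the quadratic lower bound, with the $o(n^2)$ term absorbing the per-level constants $O(1)$ lost along the way.
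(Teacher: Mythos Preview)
Your inductive decomposition breaks at step~(a). The prefix $w_1$ you isolate is only the path up to the \emph{first} moment the pair leaves $\{q_1,\dots,q_n\}$, and this can happen almost immediately: already in $\mathrsfs{F}_7$ one has $q_2q_4\xrightarrow{a}q_1q_3\xrightarrow{b}q_5q_6$, so after two steps a coordinate sits at the left tip $q_6$; in $\mathrsfs{F}_9$ the next application of the former loop letter sends it to $q_8$. Thus $|w_1|$ can be $O(1)$, not $D_n-O(1)$, and your induction hypothesis---which concerns the distance to one fixed target $Q_n$---says nothing about the distance to whatever pair happens to sit at the tip when the first exit occurs. Step~(b) then collapses as well: right after this early exit the pair already has a coordinate at $q_{n+1}$ or $q_{n+2}$, so your depth potential $\Phi$ gives at best an $O(n)$ residual, and the recurrence $D_{n+2}\ge D_n+\tfrac{n-1}{2}-O(1)$ is not established. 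More generally, $\Phi$ has range $O(n)$, so it can certify only linear distances; it cannot by itself yield the quadratic gap, and your inductive scaffolding, which was supposed to accumulate $\Theta(n)$ increments, does not stand.

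The paper does not proceed by induction along the recursion $\mathrsfs{F}_n\mapsto\mathrsfs{F}_{n+2}$. Instead it writes down, for each $n\equiv 3\pmod 4$, an explicit potential $N$ on pairs with $N(q_2q_4)=\tfrac{n^2}{4}+\tfrac{5n}{4}-7$ and $N(q_{k+2}q_{k+4})=0$ (where $k=\tfrac{n-5}{2}$), and verifies case by case that every edge of the pair digraph decreases $N$ by at most~$1$. The target is \emph{not} a pair at the tip but a pair in the middle of the chains; the potential is not a simple sum of depths but a piecewise formula with a leading term $(k{+}4)\cdot(\text{something depending on both indices mod }4)$, which is what produces the quadratic range. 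Your $\Phi$ is the right \emph{kind} of object, but far too coarse: what is needed is a potential that encodes how many full ``laps'' through the central $4$-cycle are forced between consecutive advances along the chains, and that is precisely the combinatorial content of the paper's function~$N$.
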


\noindent\emph{Proof sketch.} For the automaton $\mathrsfs{F}_n$ ($n>7$, $n\equiv 3\pmod 4$), we claim that any word mapping $q_2q_4$ to $q_{k+2}q_{k+4}$ with $k=\tfrac{n-5}2$ has length at least $\tfrac{n^2}4+\tfrac{5n}4-7$. For this, we define a function $N$ which associates a non-negative integer $N(q_iq_j)$ to each pair $q_iq_j$, $i<j$. This function is such that if a pair $q_iq_j$ is mapped by $a$ or $b$ to a pair $q_{i'}q_{j'}$, then $N(q_{i'}q_{j'})\ge N(q_iq_j)-1$. This implies that if $(q_iq_j){\cdot}w=q_{s}q_{t}$ for some word $w$, then the length of $w$ is at least $N(q_iq_j)-N(q_{s}q_{t})$. The number assigned to $q_{k+2}q_{k+4}$ is 0, while the number given to $q_2q_4$ is equal to $\tfrac{n^2}4+\tfrac{5n}4-7$, thus, the claim holds. 

In addition, 
\jv{we describe} \fg{a word of length $\tfrac{n^2}4+\tfrac{5n}4-7$ that maps $q_2q_4$ to $q_{k+2}q_{k+4}$. Therefore $\tfrac{n^2}4+\tfrac{5n}4-7$ is the exact value of the ``distance'' between these two particular pairs}.

A similar argument holds for $n\equiv 1\pmod 4$, with the distance between two particular pairs of states being at least $\tfrac{n^2}4+\tfrac{5n}4-7.5$.\qed

\jv{
\begin{proof}
Recall that we aim to define a function $N$ which associates an integer $N(q_iq_j)\ge0$ to each pair $q_iq_j$, $i<j$, and has the following property:
\begin{equation}
\label{eq:subtraction}
\text{if $q_iq_j$ is mapped by $a$ or $b$ to a pair $q_{i'}q_{j'}$, then $N(q_{i'}q_{j'})\ge N(q_iq_j)-1$.}
\end{equation}
For an illustration, see Fig.~\ref{VeryHardFamily7SQN} which presents the pair digraph of the automaton $\mathrsfs{F}_7$ with the values of the corresponding function $N$ shown at each vertex.

\begin{figure}[ht]
\begin{center}
\scalebox{0.8}{
\begin{tikzpicture}[->,>=stealth',shorten >=1pt,auto,node distance=1.5cm,
                    semithick]
  \tikzstyle{every state}=[fill=light-gray,draw=none,text=black, scale=1]

  \node[state] 		(A)                    {$15$};
  \node[state]         (B) [left of=A] {$14$};
  \node[state]         (C) [below of =B] {$13$};
  \node[state]         (D) [below of =C] {$12$};

  \node[state]         (E) [right of =D] {$11$};
  \node[state]         (F) [below right of =E] {$10$};
  \node[state]         (G) [above right of =F] {$10$};
  \node[state]         (H) [above left of =G] {$9$};

  \node[state]         (I) [right of =G] {$11$};

    \node[state]         (J) [right= 3.7cm of F] {$9$};
  \node[state]         (K) [right = 3.7cm of H] {$8$};
  \node[state]         (L) [right of =K] {$7$};
  \node[state]         (M) [right of =J] {$6$};

    \node[state]         (N) [right= 2cm of M] {$5$};
  \node[state]         (O) [above left of =N] {$6$};
    \node[state]         (P) [right = 2cm of L] {$7$};
  \node[state]         (Q) [above right of =N] {$4$};

 \node[state]         (R) [above of =Q] {$3$};
  \node[state]         (S) [above left of =R] {$2$};
    \node[state]         (T) [above right of =S] {$1$};
  \node[state]         (U) [below right of =T] {$0$};

  \path (A) edge     [bend left=20]     node {$a$} (B)
  			edge          [loop right]    node {$b$} (B)
         (B) edge    [bend left=20]     node {$a$} (A)
  			edge     [bend left=20]   node {$b$} (C)
            (C) edge    [bend left=20]     node {$a$} (D)
  			edge     [bend left=20]   node {$b$} (B)
		(D) edge    [bend left=20]     node {$a$} (C)
  			edge     [bend left=20]   node {$b$} (E)
        (E) edge        node {$a$} (F)
  			edge     [bend left=20]   node {$b$} (D)
        (F) edge        node {$a$} (G)
  			edge     [bend left=3]   node {$b$} (J)
        (G) edge        node {$a$} (H)
  			edge     [bend left=5]   node {$b$} (I)
       	(H) edge        node {$a$} (E)
  			edge     [bend left=3]   node {$b$} (K)
        (I) edge     [bend left=10]     node {$b$} (G)
  			edge          [loop right]    node {$a$} (B)
        (J) edge        node [swap]{$a$} (K)
  			edge     [bend left=3]   node {$b$} (F)
        (K) edge        node [swap] {$a$} (L)
  			edge     [bend left=3]   node {$b$} (H)
        (L) edge        node [swap]{$a$} (M)
  			edge     [bend left=5]   node {$b$} (P)
        (M) edge        node [swap]{$a$} (J)
  			edge     [bend left=5]   node {$b$} (N)
        (N) edge        node {$a$} (Q)
  			edge     [bend left=5]   node {$b$} (M)
        (O) edge        node {$a$}(N)
  			edge     [loop left]   node {$b$} (N)
        (P) edge        node {$a$} (O)
  			edge     [bend left=10]   node {$b$} (L)
       (Q) edge        node  {$a$} (P)
  			edge     [bend left=10]   node {$b$} (R)
      (R) edge        node [swap]{$a$} (S)
  			edge     [bend left=10]   node {$b$} (Q)
      (S) edge        node [swap]{$a$} (T)
  			edge     [loop left]   node {$b$} (Q)
      (T) edge        node [swap]{$a$} (U)
  			edge     [loop left]   node {$b$} (Q)
      (U) edge        node [swap]{$a$} (R)
  			edge     [loop right]   node {$b$} (Q) ;
\end{tikzpicture}
}
\end{center}
\caption{The pair digraph of $\mathrsfs{F}_7$, with function values}
\label{VeryHardFamily7SQN}
\end{figure}

For $n>7$, $n\equiv 3\pmod 4$, the values of the function $N$ are provided in the two lists below. Some of the formulas in the lists involve one or two positive integer parameters denoted by $m$ and $m'$. We always use $m'$ for the index of the first state of a pair and $m$ for the index of the second state; we do not specify the ranges of these parameters as they should be clear from the context. We use the following conventions: $k=\tfrac{n-5}2$, $N_1=\tfrac{k+3}2$, $N_2=(k+4)(k-1)$.

Our first list contains the values of $N$ for the pairs that involve one or two of the ``central'' states $q_1,q_2,q_3,q_4$ of the automaton $\mathrsfs{F}_n$ or one or two of its ``extreme'' states $q_{2k+4}$ and $q_{2k+5}$. (Our terminology follows the pictorial presentation of $\mathrsfs{F}_n$ in Fig.~\ref{VeryHardFamily2k5odd}.)
\begin{itemize}
\item $N(q_1q_2)=N_1+ N_2+2k+1$;
\item $N(q_1q_3)=N_1+N_2+4k+7$;
\item $N(q_1q_4)=N_1+N_2+2k+2$;
\item $N(q_1q_{4m+1})=\begin{cases}
\tfrac{k+3}2 &\text{if $m=N_1-1$},\\
N_1+(k+4)(k-2m-1)+2k+3 &\text{otherwise};
\end{cases}$
\item $N(q_1q_{4m+2})=N_1+(k+4)(k-2m-1)+2k-2m+2$;
\item $N(q_1q_{4m+3})=\begin{cases}
N_1+N_2+2k+6 &\text{if $m=1$},\\
N_1+(k+4)(k-2m+1)+2k+8&\text{otherwise};
\end{cases}$
\item $N(q_1q_{4m+4})=N_1+(k+4)(k-2m+1)+2k-2m+3$;
\item $N(q_1q_{2k+4})=N_1+k+2$;
\item $N(q_1q_{2k+5})=N_1+2k+8$;
\item[]

\item $N(q_2q_3)=N_1+N_2+2k+4$;
\item $N(q_2q_4)=N_1+N_2+4k+8$;
\item $N(q_2q_{4m+1})=\begin{cases}
N_1+N_2+2k+5 &\text{if $m=1$},\\
N_1+(k+4)(k-2m+1)+2k+7  &\text{otherwise};
\end{cases}$
\item $N(q_2q_{4m+2})=N_1+(k+4)(k-2m+1)+2k-2m+2$;
\item $N(q_2q_{4m+3})=N_1+(k+4)(k-2m-1)+2k+6$;
\item $N(q_2q_{4m+4})=N_1+(k+4)(k-2m-1)+2k-2m+1$;
\item $N(q_2q_{2k+4})=N_1+k+1$;
\item $N(q_2q_{2k+5})=N_1-1$;
\item[]

\item $N(q_3q_4)=N_1+N_2+2k+3$
\item $N(q_3q_{4m+1})=\begin{cases}
\tfrac{k-1}2 &\text{if $m=N_1-1$},\\
N_1+(k+4)(k-2m-1)+2k+5 &\text{otherwise};
\end{cases}$
\item $N(q_3q_{4m+2})=N_1+(k+4)(k-2m-1)+2k-2m+4$;
\item $N(q_3q_{4m+3})=\begin{cases}
N_1+N_2+2k+5 &\text{if $m=1$},\\
N_1+(k+4)(k-2m+1)+2k+6 &\text{otherwise};
\end{cases}$
\item $N(q_3q_{4m+4})=N_1+(k+4)(k-2m+1)+2k-2m+1$;
\item $N(q_3q_{2k+4})=N_1+k$;
\item $N(q_3q_{2k+5})=N_1+2k+6$;
\item[]

\item $N(q_4q_{4m+1})=\begin{cases}
N_1+N_2+2k+4 &\text{if $m=1$},\\
N_1+(k+4)(k-2m+1)+2k+5 &\text{otherwise};
\end{cases}$
\item $N(q_4q_{4m+2})=N_1+(k+4)(k-2m+1)+2k-2m+4$;
\item $N(q_4q_{4m+3})=N_1+(k+4)(k-2m-1)+2k+4$;
\item $N(q_4q_{4m+4})=N_1+(k+4)(k-2m-1)+2k-2m+3$;
\item $N(q_4q_{2k+4})=N_1+k+3$;
\item $N(q_4q_{2k+5})=N_1+1$;
\item[]

\item $N(q_{4m'+1}q_{2k+4})=\begin{cases}
N_1+N_2+3k+7 &\text{if $2m'=k+1$},\\
N_1+(k+4)(2m')+k+1 &\text{otherwise};
\end{cases}$
\item $N(q_{4m'+1}q_{2k+5})=\begin{cases}
N_1+(k+4)(2m'-2)+2k+4+2m' &\text{if $m'=N_1-1$};\\
N_1+(k+4)(2m'-2)+2k+5+2m' &\text{otherwise};
\end{cases}$
\item $N(q_{4m'+2}q_{2k+4})=N_1+(k+4)2m'+k+1+4m'$;
\item $N(q_{4m'+2}q_{2k+5})=\begin{cases}
N_1+2k+9 &\text{if $m'=1$},\\
N_1+N_2+2k+m'+8 &\text{if $2m'=k+1$},\\
N_1+(k+4)(2m'-2)+2k+2m'+7 &\text{otherwise};
\end{cases}$
\item $N(q_{4m'+3},q_{2k+4})=N_1+(k+4)(2m')+k$;
\item $N(q_{4m'+3}q_{2k+5})=\begin{cases}
N_1+(k+4)(2m')+2k+2m'+5 &\text{if $2m'=k-1$},\\
N_1+(k+4)(2m')+2k+2m'+6 &\text{otherwise};
\end{cases}$
\item $N(q_{4m'+4}q_{2k+4})=N_1+(k+4)2m'+k+2+4m'$;
\item $N(q_{4m'+4},q_{2k+5})=\begin{cases}
N_1+N_2+3k+5 &\text{if $2m'=k-1$},\\
N_1+(k+4)(2m')+2k+2m'+8  &\text{otherwise};
\end{cases}$
\item $N(q_{2k+4}q_{2k+5})=N_1+N_2+3k+6$.
\end{itemize}

Our second list contains the values of $N$ for the remaining pairs. In addition to our earlier conventions, we also use $M=m+m'$ and $M'=m-m'$ here.

\begin{itemize}
\item $N(q_{4m'+1}q_{4m+1})=N_1+(k+4)(k-2M'+1)+2k+2m'+5$;
\item $N(q_{4m'+1}q_{4m+2})=\begin{cases}
N_1+N_2+4k+8-2m &\text{if $m'=m$},\\
N_1+(k+4)(k-2M'+1)+2k-2m+2 &\text{otherwise};
\end{cases}$
\item $N(q_{4m'+1}q_{4m+3})=\begin{cases}
N_1+(k+4)(k-2M-1))+2k+2m'+4 &\text{if $2M<k+1$};\\
\tfrac{4m-k-1}2 &\text{if $2M=k+1$};\\
N_1+(k+4)(2M-k-3)+2k+2m'+5 &\text{otherwise};
\end{cases}$
\item $N(q_{4m'+1}q_{4m+4})=\begin{cases}
N_1+(k+4)(k-2M-1)+2k-2m+3 &\text{if $2M<k+1$};\\
N_1+2m  &\text{if $2M=k+1$};\\
N_1+(k+4)(2M-k-3) +2k+2m+8 &\text{otherwise};
\end{cases}$
\item[]

\item $N(q_{4m'+2}q_{4m+1})=\begin{cases}
N_1+N_2+2k+2m'+5 &\text{if $m'=m-1$},\\
N_1+(k+4)(k-2M'+1)+2k+2m'+7 &\text{otherwise};
\end{cases}$
\item $N(q_{4m'+2}q_{4m+2})= N_1+(k+4)(k-2M'+1)+2k-2m+4m'+2$;
\item $N(q_{4m'+2}q_{4m+3})=\begin{cases}
N_1+(k+4)(k-2M-1)+2k-2m+4 &\text{if $2M<k+1$};\\
N_1+2m'-1  &\text{if $2M=k+1$};\\
N_1+(k+4)(2M-k-3) +2k+2m'+7&\text{otherwise};
\end{cases}$
\item $N(q_{4m'+2}q_{4m+4})=\begin{cases}
N_1+(k+4)(k-2M-1)+2k-2m'+1 &\text{if $2M<k+1$};\\
N_1+k+2m'+1 &\text{if $2M=k+1$};\\
N_1+(k+4)(2M-k-1)+4m'+2m &\text{otherwise};
\end{cases}$
\item[]

\item $N(q_{4m'+3}q_{4m+1})=\begin{cases}
N_1+(k+4) (k-2M-1))+2k+2m'+5 &\text{if $2M<k+1$};\\
\tfrac{4m-k-3}2 &\text{if $2M=k+1$};\\
N_1+(k+4)(2M-k-3)+2k+2m'+6 &\text{otherwise};
\end{cases}$
\item $N(q_{4m'+3}q_{4m+2})=\begin{cases}
N_1+(k+4)(k-2M+1)+2k-2m+4 &\text{if $2M<k+1$};\\
N_1+2m-1  &\text{if $2M=k+1$};\\
N_1+(k+4)(2M-k-3) +2k+2m+7 &\text{otherwise};
\end{cases}$
\item $N(q_{4m'+3}q_{4m+3})=\begin{cases}
N_1+(k+4)(k-2M'+1)+2k+2m+3 &\text{if $m=m'+1$},\\
N_1+(k+4)(k-2M'+1)+2k+2m'+6 &\text{otherwise};
\end{cases}$
\item $N(q_{4m'+3}q_{4m+4})=\begin{cases}
N_1+N_2+4k+7-2m &\text{if $m'=m$},\\
N_1+(k+4)(k-2M'+1)+2k-2m+1&\text{otherwise};
\end{cases}$
\item[]

\item $N(q_{4m'+4}q_{4m+1})=\begin{cases}
N_1+(k+4)(k-2M-1)+2k-2m'+3 &\text{if $2M<k+1$};\\
N_1+2m'  &\text{if $2M=k+1$};\\
N_1+(k+4)(2M-k-3) +2k+2m'+8 &\text{otherwise};
\end{cases}$
\item $N(q_{4m'+4},q_{4m+2})=\begin{cases}
N_1+(k+4)(k-2M-1)+2k-2m'+1 &\text{if $2M<k+1$};\\
N_1+k+2m'+2 &\text{if $2M=k+1$};\\
N_1+(k+4)(2M-k-1)+4m+2m'-1 &\text{otherwise};
\end{cases}$
\item $N(q_{4m'+4}q_{4m+3})=\begin{cases}
N_1+N_2+2k+2m'+6 &\text{if $m=m'+1$},\\
N_1+(k+4)(k-2M'+1)+2k+2m'+8 &\text{otherwise};
\end{cases}$
\item $N(q_{4m'+4}q_{4m+4})=N_1+(k+4)(k-2M'+1)+2k-2m+4m'+3$.
\end{itemize}

It can be routinely verified that the function $N$ defined this way satisfies \eqref{eq:subtraction}. Therefore the shortest word mapping $q_2q_4$ to $q_{k+2}q_{k+4}$ is at least $N(q_2q_4)-N(q_{k+2}q_{k+4})$ letters long. Unfolding the definition, we obtain $N(q_{k+2}q_{k+4})=0$ and
\begin{multline*}
N(q_2q_4)=N_1+N_2+4k+8=\frac{k+3}2+(k+4)(k-1)+4k+8\\
=k^2+\frac{15k}2+\frac{11}2=\frac{n^2}4+\frac{5n}4-7.
\end{multline*}

\setcounter{table}{1}
In addition, Table \ref{word} provides the construction for a word of length $\frac{n^2}4+\frac{5n}4-7$ which maps $q_2q_4$ to $q_{k+2}q_{k+4}$. The word is composed of several factors being labels of a certain segments of the directed path from
$q_2q_4$ to $q_{k+2}q_{k+4}$ in the pair digraph of the automaton $\mathrsfs{F}_n$. For each segment we give its start and end pairs as well as its label and length. There are 4 ``starting'' factors of total length $4k+8$, followed by $2(k-1)$ ``inner'' factors, forming $\frac{k-1}2$ words of total length $2k+8$ each, and 2 ``finishing'' factors of total length $\frac{k+3}2$. Altogether they indeed form a word of length
\[
4k+8+k^2+3k-4+\dfrac{k+3}2=k^2+\frac{15k}2+\frac{11}2=\frac{n^2}4+\frac{5n}4-7.
\]

\begin{table}[h!]
\begin{center}
\begin{tabular}{|c| c|@\quad c@\quad|c|c|}
\hline
   &&&& Sum of lengths \\
   Start pair&End pair&Factor &Length of the factor&of factors\\
   &&&&within the group\\
   \hline
  $q_2q_4$& $q_1q_3$ & $a$ & 1& \\
  $q_1q_3$& $q_1q_7$ & $(ba)^kb$ & $2k+1$& \\
  $q_1q_7$& $q_3q_8$ & $aba^3ba$ & $7$& \\
  $q_3q_8$& $q_1q_{11}$ & $(ba)^{k-1}b$ & $2k-1$& \\
  &&&&$4k+8$\\
  \hline
  $q_1q_{11}$& $q_1q_5$ & $a^3ba\rule{0pt}{14pt}$ & $5$& \\
  $q_1q_5$& $q_3q_6$ & $b$ & $1$&\\
  $q_3q_6$& $q_3q_{12}$ & $a^3ba$ & $5$&\\
  $q_3q_{12}$& $q_1q_{15}$ & $(ba)^{k-2}b$ & $2k-3$& \\
    &&&&\\
    \hline
  $q_1q_{15}$& $q_1q_9$ & $a^3ba\rule{0pt}{14pt}$ & $5$& \\
  $q_1q_9$& $q_3q_{10}$ & $bab$ & $3$& \\
  $q_3q_{10}$& $q_3q_{16}$ & $a^3ba$ & $5$ &\\
  $q_3q_{16}$& $q_1q_{19}$ & $(ba)^{k-3}b$ & $2k-5$& \\
    &&&&\\
   \hline
   \dots & \dots & \dots & \dots & \\
   \hline
  $q_1q_{2k+1}$& $q_1q_{2k-5}$ & $a^3ba\rule{0pt}{14pt}$ & $5$&\\
  $q_1q_{2k-5}$& $q_3q_{2k-4}$ & $(ba)^{\frac{k-5}2}b$ & $k-4$ &\\
  $q_3q_{2k-4}$& $q_3q_{2k+2}$ & $a^3ba$ & $5$&\\
  $q_3q_{2k+2}$& $q_1q_{2k+5}$ & $(ba)^{\frac{k+1}2}b$ & $k+2$ &\\
      &&&&\\
  \hline
   $q_1q_{2k+5}$& $q_{1}q_{2k-1}$ & $a^3ba\rule{0pt}{14pt}$ & $5$&\\
   $q_{1}q_{2k-1}$& $q_{3}q_{2k}$ & $(ba)^{\frac{k-3}2}b$ & $k-2$&\\
   $q_{3}q_{2k}$& $q_{3}q_{2k+4}$ & $a^3ba$ & $5$&\\
   $q_{3}q_{2k+4}$& $q_1q_{2k+3}$ & $(ba)^{\frac{k-1}2}b$ & $k$ &\\
       &&&&$(2k+8)\dfrac{k-1}2\rule[4pt]{0pt}{14pt}$\\
       &&&&$=k^2+3k-4$\rule{0pt}{14pt}\\
   \hline
   $q_1q_{2k+3}$& $q_{3}q_{2k+3}$ & $a^2\rule{0pt}{14pt}$ & $2$ &\\
   $q_{3}q_{2k+3}$& $q_{k+2}q_{k+4}$ & $(ba)^{\frac{k-3}4}b$ & $\dfrac{k-1}2$ &\\
       &&&&$\dfrac{k+3}2$\rule[-8pt]{0pt}{18pt}\\
  \hline
  \end{tabular}
\end{center}
\caption{Construction of a word bringing $q_2q_4$ to $q_{k+2}q_{k+4}$}
\label{word}
\end{table}
\end{proof}
}

Our numerical experiments confirm that $\tfrac{n^2}4+\tfrac{5n}4-7$ is indeed the diameter of the pair graph of the automaton $\mathrsfs{F}_n$ for $n\equiv 3\pmod 4$ from $n=11$ to $n=31$ while $\tfrac{n^2}4+\tfrac{5n}4-7.5$ is the exact value of the diameter for $n\equiv 1\pmod 4$ from $n=13$ to $n=29$.

We have also computed the largest diameter of the pair digraph $P(A)$ for all $A \subseteq S_n$ with $|A|=2$ and $n=5,7,9$ and performed a number of random sampling experiments with two permutations for larger values of $n$. The experimental results suggest that the pair digraph of the automaton $\mathrsfs{F}_n$ has the largest diameter among all possible pair digraphs. Thus, we formulate the following:
\begin{conj}
The diameter of the pair digraph for a subset of $S_n$ \ffg{is bounded above} \fg{by $\tfrac{n^2}4 + o(n^2)$.}
\end{conj}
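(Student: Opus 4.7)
The plan is to attack the conjectured upper bound in three stages: reduce to the case where the generated group acts transitively on pairs, exploit the ensuing structural classification, and then build a hub-and-spoke routing argument whose worst case matches the lower bound from Theorem~\ref{DiameterF}.

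First, I would observe that the pair digraph $P(A)$ has finite diameter only when $G=\langle A\rangle$ acts transitively on unordered pairs, i.e.\ $G$ is 2-homogeneous. When $G$ is not 2-homogeneous, each orbit of unordered pairs is handled separately, and on each orbit the problem reduces to a 2-homogeneous quotient acting on a smaller set, for which induction on $n$ would apply. Thus we may assume $G$ is 2-homogeneous on $\{1,\dots,n\}$.  A classical theorem of Kantor (via the Classification of Finite Simple Groups) then forces $G$ to fall into a short list of families: $G\supseteq A_n$; a doubly transitive almost simple group such as $\mathrm{PSL}_2(q)$ or $\mathrm{PGL}_2(q)$ on $q+1$ points; an affine 2-transitive group contained in $\mathrm{A}\Gamma\mathrm{L}_1(q)$; or one of finitely many sporadic doubly transitive actions.

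Second, I would dispose of the case $G\supseteq A_n$ using the Helfgott--Seress quasipolynomial bound $\exp\bigl(O((\log n)^4\log\log n)\bigr)$ on $\diam(S_n)$: this bound transfers directly to $\diam P(A)$ and is far smaller than $\tfrac{n^2}4 + o(n^2)$. The remaining proper 2-homogeneous subgroups all have order at most $n\cdot\mathrm{polylog}(n)$ (except for a few bounded-index families), so their Cayley graphs have at most that many vertices, and a BFS argument bounds $\diam P(A)$ by $O(n\log n)$ in essentially every remaining case. This already handles all but an explicit short list of structurally constrained generating sets.

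Third, for the hardest regime --- where one wants a matching upper bound tight to $\tfrac{n^2}4$ rather than merely polynomial --- I would design a hub-and-spoke routing on pairs. Fix a canonical ``hub'' pair $\pi_0$, and for an arbitrary pair $\pi=\{i,j\}$, route in two phases. Phase one uses point-transitivity of $G$ to move $i$ to the first coordinate of $\pi_0$ in $O(n)$ letters, at the cost of displacing $j$ to some state $j'$. Phase two then slides $j'$ into place by a sequence of ``shuttle'' words that fix the first coordinate while translating the second; the key step is to show, by a careful amortized analysis analogous to the $N$-function of Theorem~\ref{DiameterF} but used \emph{upward} rather than downward, that each such shuttle has length $O(n)$ and at most $\tfrac{n}{2}+o(n)$ shuttles are needed. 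Summing yields $\tfrac{n^2}8+o(n^2)$ per pair, so the triangle inequality gives $\diam P(A)\le \tfrac{n^2}4+o(n^2)$.

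The main obstacle, and the reason this remains a conjecture, is controlling Phase two uniformly across generating sets: the very same mechanism that forces the $\tfrac{n^2}4$ lower bound in $\mathrsfs{F}_n$ can in principle obstruct an efficient shuttle when the generators' cycle structures are adversarial. Even proving that \emph{some} quadratic bound of the form $cn^2$ holds for every 2-homogeneous $A$ with $c$ independent of $A$ seems to require a genuinely new combinatorial input beyond what the group classification supplies --- likely a uniform ``amenable shuttle'' lemma showing that in any 2-homogeneous $G$ one can find a short word whose cyclic structure is compatible with the routing above. Pinning down the sharp constant $\tfrac14$ rather than a larger constant is an additional, finer difficulty that would require a tight extremal analysis of $\mathrsfs{F}_n$-like families.
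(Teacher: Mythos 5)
This statement is stated in the paper as a \emph{conjecture}: the authors offer only computational evidence (exhaustive search over two-generator subsets for $n=5,7,9$ and random sampling for larger $n$) together with the matching lower bound of Theorem~\ref{DiameterF}, and they give no proof. So there is no argument in the paper to compare yours against, and any correct proof would be a new result. Your proposal, which you yourself concede does not close, also contains steps that are not merely incomplete but wrong. The central one is the treatment of the case $\langle A\rangle\supseteq A_n$: the Helfgott--Seress bound $\exp\bigl(O((\log n)^4\log\log n)\bigr)=n^{O((\log n)^3\log\log n)}$ is \emph{quasipolynomial}, hence eventually larger than every polynomial in $n$; it is not ``far smaller than $\tfrac{n^2}4$'', and transferring it to $\diam P(A)$ gives something weaker than the trivial bound $\tfrac{n(n-1)}2$ on the number of pairs. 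This is the generic case (two random permutations generate $A_n$ or $S_n$ with high probability, and it presumably covers the extremal family $\mathrsfs{F}_n$ itself), and the entire motivation for studying the ``local'' action on pairs is precisely that the known global diameter bounds for $S_n$ are too weak to yield anything quadratic. A sub-$n^2$ bound on $\diam(S_n)$ would itself be a breakthrough on Babai's problem.

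Two further steps would fail as written. First, the proper $2$-transitive groups do not have order $n\cdot\mathrm{polylog}(n)$: $\mathrm{PGL}_2(q)$ on $q+1$ points has order $\Theta(n^3)$, and the affine groups $\mathrm{AGL}_d(2)$ on $2^d$ points have order $n^{\Theta(\log n)}$, so bounding $\diam P(A)$ by the group order or by a BFS on the Cayley graph gives nothing below the trivial $\tfrac{n(n-1)}2$. Second, the reduction of the non-$2$-homogeneous case is unfounded: an orbit of $\langle A\rangle$ on unordered pairs need not be the pair action of any group of smaller degree (for an imprimitive group with two blocks the cross-block orbit already has size $\tfrac{n^2}4$), so there is no ``smaller $n$'' to induct on, and it is exactly in such orbits that the conjectured constant $\tfrac14$ is threatened. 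Finally, the hub-and-spoke Phase two --- finding words in $A$ of length $O(n)$ that stabilize one point of a pair while advancing the other --- presupposes short generators for a point stabilizer, which is the genuine open difficulty here; as you note, this is where the argument stops being a proof. In short, the statement remains open, and your sketch would need the $A_n$ case and the stabilizer-routing lemma replaced by genuinely new ideas before it could be considered an approach.
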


\section*{Conclusion}

We studied the hybrid Babai--\v{C}ern\'{y} problem, where the question is to find tight bounds on the reset threshold for automata with the full transition monoid. We presented a series of $n$-state automata $\mathrsfs{V}_n$ in this class with the reset threshold equal to $\tfrac{n(n-1)}2$, thus establishing a lower bound for the problem, and found an upper bound with the same growth rate, namely, $2n^2+o(n^2)$. We also described a series of $n$-state automata with diameter of the pair digraph equal to $\tfrac{n^2}4 + o(n^2)$.

For follow-up work, one direction is to refine the bounds with respect to the constants that do not match yet. Also, a lower bound for the hybrid problem using only three letters (generators) is of interest, since the number of letters of the presented family $\mathrsfs{V}_n$ is equal to the number of states.

\bibliography{gggjv}

\end{document}